\documentclass[12pt]{article}
\usepackage[onehalfspacing]{setspace}
\usepackage{amsfonts}
\usepackage{amsmath}
\usepackage{amssymb}
\usepackage{mathtools}
\usepackage{amsthm}
\usepackage{comment}
\usepackage{xfrac}
\usepackage{graphicx}
\usepackage{libertine,libertinust1math}
\usepackage{accents}
\usepackage[dvipsnames]{xcolor}
\usepackage{pdfpages}
\usepackage{enumitem}
\usepackage{tikz-cd}
\usepackage{floatrow}
\usepackage{pgfplots}
 \usepgfplotslibrary{fillbetween}
\pgfplotsset{compat=1.18}
\usepackage[round,longnamesfirst]{natbib} 
\usepackage{multirow}
\usepackage{dutchcal}
\usepackage[margin=1.25in]{geometry}
\usepackage{caption}
\usepackage{subcaption}
\usepackage{epigraph}
\usetikzlibrary{calc}
\usetikzlibrary{arrows.meta}
\usepackage{hyperref}
\usepackage[noabbrev,capitalise,nameinlink]{cleveref}
\usepackage{chngcntr}
\usepackage{apptools}
\usepackage[title]{appendix}
\floatsetup[figure]{capposition=top}

\crefname{appendix}{appendix}{appendices}
\Crefname{appendix}{Appendix}{Appendices}
\Crefname{claim}{Claim}{Claims}

\definecolor{DarkBlue}{rgb}{0.0, 0.0, 0.55}
\definecolor{DarkRed}{rgb}{0.55,0.0,0.0}
\hypersetup{
pdftitle={Uncharted Waters: Selling a New Product Robustly},
pdfauthor={Kun Zhang},
}
\hypersetup{
     colorlinks = true,
     linkcolor = DarkRed,
     citecolor = DarkBlue,
     urlcolor = Fuchsia,
}

\newcommand{\ubar}[1]{\underaccent{\bar}{#1}}

\DeclareMathOperator*{\argmax}{arg\,max}

\newtheorem{theorem}{Theorem}
\newtheorem{lemma}{Lemma}
\newtheorem{proposition}{Proposition}
\newtheorem{claim}{Claim}

\newtheorem{corollary}{Corollary}

\theoremstyle{definition}

\definecolor{PolicyUniform}{rgb}{0.33,0.53,0.74}
\definecolor{PolicyFull}{rgb}{0.50,0.39,0.70}
\definecolor{PolicyMixture}{rgb}{0.70,0.30,0.30}
\definecolor{myblue}{RGB}{142,170,204}
\definecolor{mymaroon}{RGB}{214,157,157}
\definecolor{mygreen}{RGB}{151,188,151}
\definecolor{mypurple}{RGB}{166,156,204}
\definecolor{curvegray}{RGB}{85,85,85}

\begin{document}
\begin{titlepage}
\title{Uncharted Waters: Selling a New Product Robustly}
\author{Kun Zhang\thanks{School of Economics, University of Queensland. Email: \href{mailto:kun@kunzhang.org}{kun@kunzhang.org}. 
\newline 
I am indebted to Hector Chade for his continued guidance and encouragement, and I am grateful to Heski Bar-Isaac, Andreas Kleiner, and Mark Whitmeyer for insightful discussions and suggestions. I also thank numerous colleagues and seminar and conference audiences for invaluable feedback, especially the discussants Jesse Bull, Zheng Gong, Nicolas Pastrian, and Luca Sandrini. All errors are my own.
}}
\date{\today}

\maketitle
\thispagestyle{empty} 

\begin{abstract}
    New products often involve uncertainty about product fit, while sellers may also be unsure about what alternatives buyers face. I study a seller of a new product who sets a price and provides product-fit information before the buyer decides whether to search for an outside option. The buyer knows the outside-option distribution; the seller knows only its mean and bounds and maximizes guaranteed profit across compatible distributions. Information provision has two roles: it can hedge against uncertainty about the outside-option distribution and deter search by making buying without search attractive. Price governs their relative value: a higher price raises revenue per sale but makes search deterrence harder. The results help explain why new products differ in product-fit information and show that lower search costs can raise prices and make information noisier, challenging the presumption that easier search necessarily benefits consumers.
\end{abstract}

\end{titlepage}
\setcounter{page}{1}

\section{Introduction}
Rapid technological progress has brought a growing number of new products to market. Examples include e-readers, electric vehicles, smartwatches, and virtual-reality headsets. Due to their novelty, sellers often have significant control over what buyers can learn about them. In particular, sellers can provide information by offering free trials, product samples, and demonstrations. Buyers can nevertheless compare the prices and features of related existing products. Yet sellers may know little about which alternatives buyers include in these comparisons, how buyers learn about them, or how attractive those alternatives are. This uncertainty gives sellers reason to seek robust selling strategies that perform reasonably well across the different alternatives buyers may face.

In this paper, I examine the following questions: How do robustness concerns shape the seller's optimal pricing and information provision strategies? Would the buyer be better off if learning about her alternatives became easier? Finally, what insights do the answers to these questions offer regarding the sale of different types of new products?

To address these questions, I study a model in which a seller faces a buyer whose match value with the product is either high or low. Although neither party knows the match value, the prior probability of a high match value is common knowledge. The seller posts a price and chooses how much information to provide about the match value.
The buyer also has access to an outside option representing her best alternative to the seller's new product. She understands the distribution of this alternative, although its realized value is initially unknown. After observing the price and receiving the information, the buyer updates her beliefs about the match value. She then decides whether to incur a cost to discover the value of the outside option --- a process I term ``search'' --- or to buy the seller's product directly. If she searches, she may still return and buy from the seller if the outside option proves less attractive than the seller's offer.

The seller is only partially informed about the buyer's outside-option distribution, knowing only its mean and bounds. 
This assumption formalizes the uncertainty described above: in new-product markets, the buyer may be able to evaluate her own search prospects, whereas the seller has only coarse information about them.
The mean and bounds capture the seller's broad knowledge of the average attractiveness and feasible range of existing alternatives. 
They leave unspecified other payoff-relevant features of the distribution that are difficult to infer from data, especially in such markets, where demand histories are short and the relevant comparison set remains uncertain. 
I therefore model the seller as seeking robustness against this remaining distributional uncertainty: she chooses a price and an information provision policy to maximize the profit she can guarantee across all outside-option distributions consistent with her information. This avoids evaluating her strategy by its expected profit under a precise probabilistic model of the buyer's search environment, which she may be unable to substantiate.

Because the buyer initially does not know whether the product is a good match, the information provided by the seller shapes the product impression she forms before deciding whether to search. This ability to shape impressions gives information two roles. 
First, it can hedge against uncertainty about the buyer's alternatives. When the information provided spreads the buyer's product impressions continuously and evenly, the probability of sale at a fixed price depends on her alternatives only through their known average attractiveness. This hedging role makes demand robust to the unknown shape of the buyer's outside-option distribution. 
Second, information can deter search. Because search is costly, a sufficiently favorable impression can make the buyer purchase without inspecting the outside option. Under the seller's uncertainty, such an impression creates ``safe'' demand only when it induces purchase without search for every outside-option distribution consistent with her information. The outside option then cannot draw the buyer away, regardless of how attractive it would have been.
Price determines how the two roles of information translate into profit. A higher price raises the profit margin from each sale but makes safe demand harder to create by reducing the buyer's payoff from buying immediately. It also makes return after search less likely by narrowing the range of outside-option values for which the buyer would still prefer the seller's product.

When search costs are low, inspecting the outside option is relatively attractive, so creating safe demand after favorable impressions would require the seller to set too low a price. 
The seller instead charges a higher price and relies on the hedging role of information: the optimal information spreads product impressions continuously and evenly, stabilizing demand across possible outside-option distributions.
The information provided is noisy: the buyer's impression is typically neither strongly favorable nor clearly unfavorable.

As the search cost increases, a favorable impression can support purchase without search at a higher price because inspecting the outside option becomes less attractive. When the search cost is sufficiently high, providing full information --- revealing whether the match value is high or low --- becomes optimal. 
If the product is revealed to be a good match, the buyer's willingness to pay is the highest possible. Because search is costly, this willingness to pay can support safe demand without requiring the seller to set too low a price.
The gain from generating safe demand whenever the match value is high then outweighs the value of generating less decisive impressions under which the buyer might search and return.

For intermediate search costs, the seller's strategy also depends on the prior probability of a high match value --- that is, on how promising the product looks before product-fit information is provided.
If the product does not look especially promising, the seller has limited scope to generate very favorable impressions, because the product is unlikely to fit the buyer especially well. 
It is then not worthwhile to lower the price to turn such impressions into safe demand, so the seller uses the same form of information as in the low-search-cost case, spreading product impressions continuously and evenly. If the product looks more promising, safe demand becomes more attractive, and the seller may want some impressions to be favorable enough to induce purchase without search. When the buyer's alternatives are relatively unattractive on average, return-after-search demand also remains valuable. The seller then combines elements of the two forms of information just described: as under full information, she sometimes gives the buyer a clear indication that the product is a strong fit, creating safe demand; as in the low-search-cost case, she otherwise gives the buyer less decisive impressions that keep the product attractive enough for her to return after search. When the buyer's alternatives are relatively attractive on average, return-after-search demand is less reliable, and full information is optimal.

The model also yields notable comparative statics. A higher search cost makes inspecting the outside option less attractive, so one might expect the seller to charge a higher price. This force is present, but the price can nevertheless jump downward. 
As discussed above, when search costs are low, the seller does not pursue safe demand and instead sets a relatively high price.
As search costs rise, the price cut needed to create safe demand becomes smaller.
Eventually, it becomes small enough that creating safe demand is worthwhile, so the seller switches to creating safe demand and the price jumps down.
Product-fit information also generally becomes more informative as search costs rise: lower search costs lead the seller to provide less decisive product-fit information, while higher search costs make sharper product-fit information more attractive. Thus, technological advances that reduce search costs need not benefit consumers: for some new products, such advances can lead to higher prices and less informative product-fit information.

The results also offer an explanation for why different kinds of new products warrant different information provision policies. Some products are \emph{revolutionary}, such as the iPhone and the 3D printer. Others are \emph{evolutionary} --- incremental improvements on existing products --- such as smart thermostats and energy-saving light bulbs. Still others are ``trade-off products'': they offer substantial benefits along new dimensions while sacrificing some familiar functionality. Portable wireless speakers, for instance, provide greater convenience but at the expense of sound quality. One interpretation of search cost is that it captures the difficulty of identifying and evaluating the best available alternative. Evolutionary products are relatively easy to compare with existing alternatives because they differ only marginally, whereas products involving less familiar trade-offs are harder to compare.

Under this interpretation, evolutionary products favor less decisive information provision, such as short trial periods or free trials with limited features. In contrast, trade-off products favor more decisive information provision, which distinguishes consumers for whom the product is a good fit from those for whom it is not and allows the seller to target the former. Policies such as extended trial periods and comprehensive demonstrations can serve this role. For revolutionary products that already look highly promising before detailed product-fit information is provided, it is attractive to give some consumers a clear indication that the product is a strong fit while leaving others with less decisive impressions. These patterns may help explain the variety of information provision policies observed among new products.

The implications discussed above rely on two key features of the model: search frictions and the seller's robustness concerns. Without search frictions, the buyer always inspects the outside option, and therefore favorable impressions cannot create safe demand. The seller still faces uncertainty about the buyer's outside-option distribution, and the hedging motive makes it optimal to spread product impressions continuously and evenly. If, instead, the seller knows the full distribution of the buyer's outside option, the hedging role disappears. Full information is then always optimal: search frictions still affect the price, but not the form of information provision. Thus, neither benchmark alone delivers the joint price--information patterns that are central to the implications for different kinds of new products.

The remainder of the paper is organized as follows. The rest of this section discusses related literature. \autoref{s:model} introduces the model. \autoref{s:main} presents the main results of the paper. \autoref{s:discussion} concludes.

\subsection{Related Literature}
The literature on selling new products has largely focused on strategic pricing,\footnote{For a survey, see \cite{c09r}. Many papers cited therein study the pricing dynamics of new products, an issue from which I abstract.} while a smaller strand studies settings in which sellers also decide what information to provide about their products.\footnote{In \cite{mr86}, the seller of a new product chooses both a price and an advertising expenditure level. Although the latter serves as a signal of product quality, it does not convey any direct information about the product.} 
\cite{hm96} study how demonstration length affects purchase probabilities for various types of new products.
\cite{flo} study how information provision and consumer reviews interact across product generations.
\cite{salamat2025trial} characterizes when trials providing product-fit information allow the seller to extract full surplus.
\cite{bcg} investigate competition between a firm introducing a new product of unknown match value and a firm offering an established product with a known match value. They find that partial information provision is optimal when the innovative firm sets its price before choosing what information to provide, whereas full information provision is optimal when it chooses what information to provide before setting its price. My paper adds to this strand by studying how search frictions and robustness concerns shape a new-product seller's joint choice of price and information provision.

In my model, the seller can use information provision to shape the buyer's search decision. This links my work to two strands of the literature on consumer search deterrence. One studies price-based search-deterrence tactics. \cite{az16} study how sellers can use pricing tools, such as buy-now discounts, exploding offers, and nonrefundable deposits, to deter buyers from exploring competing products.\footnote{The theoretical predictions in \cite{az16} are tested experimentally in \cite{bvw} and \cite{pan2023commitment}. \cite{liu2025personalized} study a variant of the model in \cite{az16} in which the seller knows the buyer's outside-option value.} The other studies search obfuscation, whereby sellers deliberately raise search costs.\footnote{For a review of this literature, see \cite{ellison2016price}.} \cite{bicc} consider a monopoly seller who can use marketing strategies to make it more costly for the buyer to assess how well the seller's product suits her. \cite{ellison2012search} extend the model of \cite{stahl1989oligopolistic} by allowing the buyer's incremental search cost to rise with her cumulative search effort and by enabling sellers to take actions that make onward search more costly.

My setup builds on the monopoly model of \cite{az16} but highlights a different search-deterrence channel: information provision. \cite{wang2017advertising}, \cite{feng2024monopoly}, \cite{koessler2024sources}, and \cite{pham2025discriminatory} study related information-based channels. In these papers, as in \cite{bicc}, the buyer searches for information about the seller's own product. Unlike in \cite{bicc}, however, the search cost remains fixed, and the seller deters search by providing information that reduces the value of learning more about the product. In my model, by contrast, the buyer's search is over outside options. Information deters search by making immediate purchase attractive enough for the buyer to forgo inspecting those alternatives.

At a higher level, this paper lies at the intersection of two literatures: robust monopoly pricing and monopoly pricing with information provision and consumer search. The latter includes \cite{ar06} and \cite{lyu2023information}, in which purchase requires search and search deterrence therefore does not arise, as well as the information-based search-deterrence papers discussed above. Existing work on robust monopoly pricing typically studies uncertainty about the buyer's value for the seller's own product.\footnote{Representative contributions in this literature include \cite{bs08,bs11}, \cite{carrasco2018optimal}, \cite{d18s}, \cite{hk20}, and \cite{che2025robustly}. For a recent survey of robust contracting, see \cite{carroll2019robustness}.} Here, the seller is instead uncertain about the buyer's outside-option distribution. Because the seller also chooses what information to provide, the robustness concern shapes the joint design of price and information provision.

The seller's robust selling problem can be viewed as a contest in which she chooses a price and an information provision policy while an adversarial decision maker selects the outside-option distribution to minimize the seller's profit. This connects my work, from a technical perspective, to the literature on competitive information provision, which traces back to \cite{os10}.
In the contest models studied by \cite{bc15,bc18} and \cite{ak20,ak21}, multiple senders each commit to an information provision policy and compete to be selected by a receiver. \cite{hkb} study competing sellers who choose both prices and information provision policies. \cite{au2023attraction,au2024attraction}, \cite{he2023competitive}, \cite{boleslavsky2025limits}, and \cite{hwang2025competitive} add search frictions to information-design contests. Here, however, the contest is asymmetric in two respects: even at a fixed price, the seller and the adversarial decision maker face different best-response problems; moreover, only the seller chooses the price, which affects how the buyer compares the seller's product with the outside option. The interaction between these two sources of asymmetry requires new techniques to solve the problem. Economically, the paper focuses on the seller's robust pricing and information provision decisions rather than on competition among senders.

Finally, this work is related to papers on information design in which the sender is uncertain about aspects of the receiver's decision problem and maximizes her worst-case payoff.\footnote{Two other papers consider a different objective: the sender minimizes the worst-case gap between her payoff and what she could obtain if the uncertain environment were known. In \cite{btxz}, the uncertainty concerns the receiver's utility function. \cite{parakhonyak2026persuasion} assume that the sender is uncertain about the probability distribution over the states of the world, the receiver's utility function, and the receiver's prior.} In \cite{kosterina2022persuasion}, the sender is uncertain about the receiver's prior, which may depart from a reference prior to some degree. \cite{hw21} and \cite{dp22} assume that the sender and receiver share a common prior over the state, while the uncertainty concerns the receiver's additional information. Most closely related is \cite{sapiro2024persuasion}, who considers a setting in which the receiver takes the sender's preferred action if and only if the posterior mean of the state exceeds the receiver's outside option, and the sender is uncertain about the outside-option distribution. 
Adding the pricing channel changes the objective from maximizing the probability of the sender's preferred action to maximizing guaranteed profit, generating new implications for information provision.

\section{The Model} \label{s:model}
Buyer's match value with the product is \(x\in\{0,1\}\), where \(x=1\) denotes the high match value and \(x=0\) the low match value. The common prior assigns probability \(\mu \coloneqq \mathbb{P}(x=1) \in (0,1)\) to the high match value. Since \(x\) is binary, \(\mu\) is also the \emph{prior mean match value}. Initially, neither Seller nor Buyer knows the realization of \(x\). Seller's production cost is normalized to zero, and therefore revenue and profit coincide. 
Seller chooses a price \(p\) and an \textbf{information provision policy} \((\sigma,Y)\), consisting of a signal space \(Y\) and a mapping \(\sigma:\{0,1\}\to\Delta(Y)\). After observing a signal realization \(y\in Y\), Buyer forms the posterior belief \(w=\mathbb P(x=1\mid y)\in[0,1]\), which is also her expected match value from Seller's product. This posterior belief formalizes what the introduction refers to as Buyer's impression of the new product.

For any \(m\in[0,1]\), let \(\mathcal M(m)\) denote the set of distributions on \([0,1]\) with mean \(m\). It is well known that a cumulative distribution function (CDF) \(H\) over posteriors can be induced by an information provision policy if and only if \(H \in \mathcal{M}(\mu)\), that is, \(\mathbb{E}_H[w] = \mu\);
in words, the expected posterior equals the prior.\footnote{See, for example, \cite{kg11}.} The analysis can therefore be recast as Seller choosing \((p,H)\in[0,1]\times\mathcal M(\mu)\) instead of \((p,(\sigma,Y))\). I call \((p,H)\) a \textbf{selling strategy}, and refer to the choice of \(H\) as the choice of an information provision policy. If the realized posterior is \(w\), Buyer's \textbf{net value} from purchasing Seller's product is \(w-p\).

Buyer has unit demand and an outside option whose value \(v\) is initially unknown. Buyer knows that \(v\) is distributed according to a CDF \(G\) on \([0,1]\), but must incur a search cost \(s > 0\) to discover its realization. One way to interpret this is as a reduced form of Buyer's sequential search. I assume that \(v\) is independent of the match value \(x\) and of Seller's signal. Seller cannot observe \(v\) and does not know \(G\); she knows only that \(G\) is a distribution on \([0,1]\) with mean \(\xi\in(0,1)\). To focus on the interesting cases in which Buyer prefers searching to buying nothing, I assume \(s<\xi\).

I study Seller's problem of maximizing the revenue guarantee, defined as the infimum of expected revenue over all outside-option distributions on \([0,1]\) with mean \(\xi\). Metaphorically, after Seller chooses \((p,H)\), an adversarial decision maker, henceforth Nature, seeks to minimize Seller's revenue by choosing a distribution \(G\in\mathcal M(\xi)\). This interpretation will be used below, as it is helpful in solving Seller's optimization problem and interpreting the results.

Let
\[S_G(t) \coloneqq \mathbb{E}_{G}[\max \{t, v\}]-t=\int_{t}^{1}(v-t) \, \mathrm{d} G(v)\]
denote the expected benefit of search when Buyer's net value of purchasing Seller's product is \(t\) and the outside option is distributed according to some \(G\). Let \(a\) be the unique solution to\footnote{Throughout, the dependence of \(a\) on \(G\) and \(s\) is suppressed. Existence follows because \(S_G\) is continuous, \(S_G(0)=\xi>s\), and \(S_G(1)=0<s\). Uniqueness follows because \(s>0\) and \(S_G\) is strictly decreasing whenever it is positive. Unless explicitly specified, I use ``increasing'' and ``decreasing'' in the weak sense; that is, ``increasing'' means ``weakly increasing.'' The qualifier ``strictly'' will be added when needed.}
\begin{equation} \label{eq:res-value}
S_G(a)=s.
\end{equation}
Buyer purchases Seller's product without search whenever the expected benefit from search is no more than the search cost, that is, \(S_G(w-p)\le s\). Since \(S_G\) is decreasing, \eqref{eq:res-value} implies that this is equivalent to \(w-p\ge a\). Intuitively, \(a\) is the minimum net value at which Buyer purchases without search, or, in jargon, the \emph{reservation value} of the outside option.
It can be checked that \(a \in [\xi-s, 1-s/\xi]\). The lower bound is attained by \(\delta_{\xi}\), the degenerate distribution at \(v=\xi\), while the upper bound is attained uniquely by the binary distribution that assigns probability \(1-\xi\) to \(0\) and probability \(\xi\) to \(1\).\footnote{Let \(G_B\) denote this binary distribution. For \(a\in[0,1]\),
\[
S_{G_B}(a)=\mathbb E_{G_B}[\max\{v,a\}]-a
=\xi+(1-\xi)a-a=\xi(1-a).
\]
Hence \eqref{eq:res-value} gives \(a=1-s/\xi\). The intuition is standard in search theory \citep[e.g.,][]{ks74}: holding the mean fixed, a more dispersed outside-option distribution raises the value of search because very good options help Buyer, while very bad options are less harmful since Buyer can always return to buy Seller's product. Thus, among distributions on \([0,1]\) with mean \(\xi\), the most concentrated distribution \(\delta_\xi\) yields the lowest reservation value, while the most dispersed binary distribution \(G_B\) yields the highest.} By taking convex combinations of these two distributions, any \(a\in(\xi-s,1-s/\xi)\) can be achieved.

If Buyer instead searches, that is, if \(w-p<a\), she pays the search cost and observes the realization \(v\) of the outside option. She then returns to buy Seller's product if and only if the outside option turns out to be worse than Seller's offer, that is, if \(w-p>v\).\footnote{The tie-breaking convention is embedded in the two inequalities above: Buyer does not search when indifferent between searching and not searching, and she does not return to Seller when indifferent between Seller's offer and her outside option. Under this convention, a minimizing outside-option distribution need not exist for a given selling strategy, although Nature can choose a sequence of distributions in \(\mathcal M(\xi)\) under which Seller's expected revenue converges to the revenue guarantee. The analysis below nevertheless shows that Seller's maximization problem admits a solution. \label{fn:tie-breaking}} I assume that Seller cannot recognize whether Buyer is a first-time visitor; consequently, she cannot adjust the price when Buyer comes back from search.\footnote{For the same reason, Seller cannot benefit from offering a menu of prices depending on Buyer's report of the realized outside option: no matter whether Buyer has searched, she would make whichever report yields the lowest price. This assumption serves two purposes: (a) to focus on how search frictions and Seller's robustness concerns influence the initial ``price tag'' at the launch of a new product, and (b) to examine how information provision can help deter search without resorting to price-based search deterrence tactics. \href{https://kunzhang.org/file/roboa.pdf}{Supplementary Appendix} C.1 details what happens if Buyer's identity is recognizable to Seller, enabling Seller to make exploding offers or engage in intertemporal price discrimination.}

The timing of the game is as follows:
\begin{enumerate}[noitemsep,topsep=0pt]
    \item Seller chooses a selling strategy \((p,H)\);
    \item Nature observes Seller's choice and chooses a distribution \(G\).
       \item Buyer observes the price \(p\), the distribution \(G\), and a posterior \(w\) drawn according to \(H\). If \(w-p\ge a\), she buys immediately. Otherwise, she pays the search cost \(s\) and observes a realization \(v\) from \(G\).
    \item After search, Buyer returns to Seller and buys if and only if \(w-p>v\).
\end{enumerate}

Given a selling strategy \((p,H)\) and an outside-option distribution \(G\), Seller's expected revenue can be written compactly as\footnote{Here \(1-H(p+\min\{a,v\})\) is interpreted according to the tie-breaking rule above: when \(v<a\), the mass at posterior \(w=p+v\) is excluded from demand; when \(v\geq a\), the mass at posterior \(w=p+a\) is included in demand.}
\begin{equation} \label{eq:exp-revenue}
\Pi(p,H \mid G) \coloneqq p \,\mathbb{E}_G[1-H(p+\min\{a,v\})].
\end{equation} 
This is because conditional on the outside-option value \(v\), Buyer eventually purchases from Seller if \(w>p+v\) when \(v<a\), and if \(w\ge p+a\) when \(v\ge a\). Thus, \(\mathbb{E}_G[1-H(p+\min\{a,v\})]\) is the probability of eventual purchase, or the demand that Seller faces, under \(G\).

Seller's revenue-guarantee maximization problem can therefore be written as
\[
\max_{(p,H)\in[0,1]\times\mathcal M(\mu)}
\inf_{G\in\mathcal M(\xi)}
\Pi(p,H\mid G).
\]
A solution to this problem is called a \textbf{robustly optimal selling strategy}; its components are called a \textbf{robust price} and a \textbf{robust information provision policy}, respectively.
Equivalently, Seller solves
\[
\max_{p\in[0,1]}
\left\{
\max_{H\in\mathcal M(\mu)}
\inf_{G\in\mathcal M(\xi)}
\Pi(p,H\mid G)
\right\}.
\]
Let \(\Phi(p) \coloneqq \max_{H \in \mathcal{M}(\mu)} \inf_{G \in \mathcal{M}(\xi)} \Pi(p, H \mid G)\).
Seller's problem can be solved in two steps: first, for a fixed price \(p\), Seller chooses \(H \in \mathcal{M}(\mu)\) to maximize \(\inf_{G \in \mathcal{M}(\xi)} \Pi(p,H \mid G)\), namely Seller's revenue guarantee at that price; second, Seller chooses \(p \in [0,1]\) to maximize \(\Phi(p)\). 

\subsection{Discussion of the Model}
The assumption that Buyer knows the outside-option distribution \(G\), while Seller knows only its mean and bounds, captures a distinction between Buyer's individual search problem and the coarse features Seller can credibly infer about it. Buyer may know \(G\) because her comparison set, search process, and access to alternatives determine which outside options she expects to encounter. Seller, especially for a new product, may have enough information from surveys, related products, or aggregate market evidence to discipline the average attractiveness and feasible range of those alternatives. Such information, however, need not identify the full shape of \(G\), which is nonetheless payoff-relevant because it affects Buyer's incentive to search. Specifying a prior over possible \(G\)'s would require Seller to assign probabilities to precisely these unobserved distributional features, which can be overly demanding. I therefore formulate Seller's problem as a robust selling problem: Seller chooses a price and an information provision policy to maximize the revenue she can guarantee given the information she can credibly use. Such a guarantee is especially relevant for new-product launches, where early pricing and information choices can shape adoption and subsequent market performance, yet must often be made before Seller has enough data to learn a reliable probabilistic model of Buyer's search environment.

I assume that Buyer's match value with Seller's new product, \(x\), takes one of two values, \(0\) or \(1\). One can interpret \(x=1\) as Buyer liking the product and \(x=0\) as not liking it. Although stark, this binary specification is crucial for tractability: a distribution over posteriors \(H\) is feasible if and only if its mean equals the prior probability of liking, \(\mu\). With more than two match values, however, satisfying the mean constraint would no longer be sufficient for \(H\) to be feasible; feasibility would depend on the full prior distribution of \(x\). Even at a fixed price, Seller's robust information provision problem subject to these prior-dependent restrictions can be difficult to characterize; optimizing over price requires solving such a problem for every price. \href{https://kunzhang.org/file/roboa.pdf}{Supplementary Appendix} C.2 further discusses this tractability problem for a continuously distributed match value and shows that some key qualitative insights continue to hold.

\href{https://kunzhang.org/file/roboa.pdf}{Supplementary Appendix} C.2 also discusses several other modeling assumptions. Allowing Seller to randomize over prices or allowing Buyer to have access to a ``safe'' outside option that can be consumed without search does not qualitatively change the main takeaways. The appendix also discusses alternative assumptions about Seller's knowledge of the search cost and the outside-option distribution. I keep the baseline assumptions in the main text as they isolate the roles of price, information, and search in the clearest form.

\section{Main Results} \label{s:main}
This section presents the main results of the paper. \autoref{ss:two-roles} explains the two roles of information provision: it can hedge against uncertainty about Buyer's outside-option distribution and deter search by making buying without search attractive. 
It also explains how price shapes the trade-off between these roles: a higher price raises revenue per sale but makes search deterrence harder.
Building on this logic, \autoref{ss:robust-optimal} characterizes the robustly optimal selling strategy.
\autoref{ss:cs} studies comparative statics. All proofs are collected in \Cref{prf}.

\subsection{Two Roles of Information Provision}\label{ss:two-roles}

The first role of information provision is to hedge demand against Nature's choice of the outside-option distribution \(G\). To see the logic, fix a price \(p\). Although Nature's choice of \(G\) affects Seller's demand \(\mathbb{E}_G[1-H(p+\min\{a,v\})]\) through the distribution of \(\min\{a,v\}\), its mean is pinned down by what Seller knows: using \eqref{eq:res-value},
\[\mathbb{E}_G[\min\{a,v\}]=\mathbb{E}_G[v+a-\max\{a,v\}]=\xi-s.\]
That is, Nature can change how \(\min\{a,v\}\) is distributed around its mean, but not the mean itself.

This observation makes affine segments in the distribution over posteriors \(H\) valuable.  Suppose \(H\) is affine on the relevant values of \(p+\min\{a,v\}\). Then
\[\mathbb{E}_G[1-H(p+\min\{a,v\})]
=1-H\bigl(p+\mathbb{E}_G[\min\{a,v\}]\bigr)
=1-H(p+\xi-s);\]
that is, the affine segment makes demand depend only on the known number \(\xi-s\), rather than on the full distribution of outside options. This prevents any particular choice of the outside-option distribution from giving Nature a significant advantage, yielding the desired robustness. 

The second role of information is to deter search by generating sufficiently favorable posteriors. For a fixed outside-option distribution \(G\), posterior \(w\) induces Buyer to purchase without search whenever \(w-p\ge a\).
When Nature may instead choose any \(G\in\mathcal M(\xi)\), posterior \(w\) creates \emph{safe demand} only if Buyer purchases without search for every such distribution.
The largest reservation value Nature can induce is \(1-s/\xi\). Therefore, posterior \(w\) creates safe demand if and only if
\[
w-p\ge 1-s/\xi,
\]
or equivalently,
\[
w \ge p+1-s/\xi.
\]
If this condition fails, Nature can choose a distribution for which the reservation value strictly exceeds \(w-p\), inducing Buyer to search. Since posteriors cannot exceed one, safe demand can be created only when \(p\le s/\xi\).

To summarize, if \(p>s/\xi\), even the most favorable posterior cannot create safe demand. At such prices, the hedging logic described above calls for affine segments in the distribution over posteriors. If \(p\le s/\xi\), by contrast, posteriors in \([p+1-s/\xi,1]\) create safe demand: Buyer buys without search for every possible outside-option distribution. Information can then also serve a search-deterrence role by placing probability on such posteriors.

\subsection{Robustly Optimal Selling Strategy}\label{ss:robust-optimal}
The next result shows how the two roles of information provision, together with the way price shapes the trade-off between them, determine the robustly optimal selling strategy. To state the result, I first introduce some notation. Let \(\ubar{\mu}\) and \(\bar{\mu}\) be two thresholds for the prior \(\mu\) with \(0 < \ubar{\mu} < \bar{\mu} < 1\) for all \(\xi \in (0,1)\) and \(s \in (0,\xi)\) (see \Cref{ss:omitted-details} for formal definitions). Let \(p^*\) and \(p^{**}\) be given by
\begin{equation} \label{eq:no-det-price}
p^*\coloneqq\begin{cases} \frac{1-\sqrt{2(\xi-s)-(\xi-s)^{2}}}{1-\xi+s} & \text{if } \mu \le \ubar{\mu} \\ 2\mu-1 & \text{if } \ubar{\mu} < \mu \le \bar{\mu}  \\  1-\sqrt{\xi-s} & \text{if } \mu > \bar{\mu} \end{cases}
\end{equation}
and \(p^{**} \coloneqq s/\xi\), respectively.

Say that Seller uses \textbf{uniform information} if the induced distribution over posteriors \(H\) has no mass point at \(1\) and is uniform on a single interval ending at \(1\), apart from a possible mass point at \(0\).\footnote{When the prior \(\mu\) is small, the uniform interval alone would have mean above \(\mu\). The expected posterior is then brought back to \(\mu\) by assigning the remaining probability to posterior \(0\). Using the lowest posterior minimizes the probability needed for this adjustment and leaves more probability on the uniform segment.} This is illustrated in \autoref{subfig:unif}. 
The information provided about the match value is ``noisy'' under uniform information: each posterior in the interior of the uniform interval can arise under both the high and low match values. Say that Seller uses \textbf{full information} if \(H\) is the binary distribution on \(\{0,1\}\).  This distribution, illustrated in \autoref{subfig:full}, fully reveals the match value: if the realized posterior is 1 (0), the match value must be high (low).  Say that Seller uses \textbf{mixture information} if \(H\) has a mass point at \(1\) and is uniform on a single interval ending at \(1\). As shown in \autoref{subfig:mix}, mixture information, like uniform information, is noisy; however, when the realized posterior is \(1\), the match value must be high. The formal definitions of these information provision policies can be found in \Cref{ss:omitted-details}.

\begin{figure}
\centering
\begin{subfigure}[b]{0.31\textwidth}
\centering
\begin{tikzpicture}[scale=0.62,
    >={Stealth[length=2mm,width=1.5mm]}]
    \draw[<->,thick] (0,4.5) node[left]{\small \(H\)} -- (0,0) -- (5.3,0) node[below]{\small \(w\)};
    \draw[dashed] (4.4,0) node[below]{\small \(1\)} -- (4.4,4.0) -- (0,4.0) node[left]{\small \(1\)};
    \draw[ultra thick,PolicyUniform] (0,0.7) -- (1.35,0.7) -- (4.4,4.0) -- (5.0,4.0);
    \node[below] at (1.35,0) {\small \(\ell\)};
\end{tikzpicture} 
\caption{Uniform information}
\label{subfig:unif}
\end{subfigure}\hfill
\begin{subfigure}[b]{0.31\textwidth}
\centering
\begin{tikzpicture}[scale=0.62,>={Stealth[length=2mm,width=1.5mm]}]
    \draw[<->,thick] (0,4.5) node[left]{\small \(H\)} -- (0,0) -- (5.3,0) node[below]{\small \(w\)};
    \draw[dashed] (4.4,0) node[below]{\small \(1\)} -- (4.4,4.0) -- (0,4.0) node[left]{\small \(1\)};
    \draw[ultra thick,PolicyFull] (0,2.0) -- (4.4,2.0);
    \draw[ultra thick,PolicyFull] (4.4,4.0) -- (5.0,4.0);
\end{tikzpicture}
\caption{Full information}
\label{subfig:full}
\end{subfigure}\hfill
\begin{subfigure}[b]{0.31\textwidth}
\centering
\begin{tikzpicture}[scale=0.62,>={Stealth[length=2mm,width=1.5mm]}]
    \draw[<->,thick] (0,4.5) node[left]{\small \(H\)} -- (0,0) -- (5.3,0) node[below]{\small \(w\)};
    \draw[dashed] (4.4,0) node[below]{\small \(1\)} -- (4.4,4.0) -- (0,4.0) node[left]{\small \(1\)};
    \draw[ultra thick,PolicyMixture] (0,0) -- (1.1,0) node[below,black]{\small \(p^{**}\)} -- (4.4,3.0);
    \draw[ultra thick,PolicyMixture] (4.4,4.0) -- (5.0,4.0);
\end{tikzpicture}
\caption{Mixture information}
\label{subfig:mix}
\end{subfigure}
\caption{The three information policies that can be robustly optimal.} 
\floatfoot{The uniform information panel depicts the case with a mass point at 0; when that mass point is absent, the same affine segment starts from \(H(\ell)=0\).  Here \(\ell=p^*\) when \(\mu\le\bar\mu\), while \(\ell=2\mu-1\) when \(\mu>\bar\mu\). Unlike uniform information, both full and mixture information place positive mass on posterior \(1\).}
\label{fig:policies}
\end{figure}

The characterization below is stated in terms of three search-cost cutoffs. Let \(B_1(\xi) \coloneqq \xi(\xi-1)^2/(\xi^2+1)\), \(B_2(\xi) \coloneqq \xi(\xi-1)^2/(\xi+1)^2\), and \(B_3(\xi) \coloneqq \xi - 2\xi^2\). It can be checked that \(B_1(\xi) > B_2(\xi)\) and \(B_1(\xi) > B_3(\xi)\) for all \(\xi \in (0,1)\).

\begin{theorem}[Robustly optimal selling strategy]\label{theorem:optimal}
If \(s\ge B_1(\xi)\), then full information is optimal, and the robust price is \(p^{**}=s/\xi\). If \(s\le B_2(\xi)\), then uniform information is optimal, and the robust price is \(p^*>s/\xi\). If \(B_2(\xi)<s<B_1(\xi)\), there are two cases:
\begin{itemize}[leftmargin=2em]
    \item[(1)] If \(B_3(\xi)\le s<B_1(\xi)\), then there exists \(\hat{\mu}\in(0,1)\) such that for \(\mu<\hat{\mu}\), uniform information is optimal, and the robust price is \(p^*>s/\xi\); and for \(\mu\ge\hat{\mu}\), full information is optimal, and the robust price is \(p^{**}=s/\xi\).
    \item[(2)] If \(B_2(\xi)<s<B_3(\xi)\), then there exists \(\check{\mu}\in(0,1)\) such that for \(\mu<\check{\mu}\), uniform information is optimal, and the robust price is \(p^*>s/\xi\); and for \(\mu\ge\check{\mu}\), mixture information is optimal, and the robust price is \(p^{**}=s/\xi\).
\end{itemize}
\end{theorem}

Despite its case distinctions, \autoref{theorem:optimal} has a simple structure. Uniform information is paired with \(p^*>s/\xi\). At such prices, safe demand cannot be created, and the affine segment of uniform information serves the hedging role. Full and mixture information are paired with \(p^{**}=s/\xi\). At this price, only posterior \(1\) creates safe demand, and both forms of information place positive mass there. Both therefore create positive safe demand, and I refer to them as \textbf{deterrence policies}.

A shared feature of the robustly optimal information provision policies in \autoref{theorem:optimal} is that they all reach the highest posterior, \(1\). 
The intuition is that stopping below \(1\) leaves unused room at the top: moving the upper end of the distribution over posteriors upward raises Buyer's willingness to pay after favorable posterior realizations, which can make purchase without search more likely or support a higher price.
Since the expected posterior must remain equal to the prior \(\mu\), this upward movement requires the remaining posteriors to be lower on average. 
The cost of this lowering is limited, however, because it can come from already unfavorable posteriors: Buyer may not have purchased after observing them anyway.

\begin{figure}[t]
\centering
\begin{tikzpicture}
\begin{axis}[
    width=0.98\linewidth,
    height=0.28\linewidth,
    xmin=0, xmax=1.05,
    ymin=0, ymax=0.25,
    axis lines=left,
    axis line style={black, line width=0.75pt, -{Stealth[length=5.8pt,width=6.2pt]}},
    xlabel={},
    ylabel={},
    xtick={0.1,0.2,0.3,0.4,0.5,0.6,0.7,0.8,0.9,1},
    ytick={0.1,0.2},
    tick style={black, line width=0.75pt},
    tick label style={font=\footnotesize},
    enlargelimits=false,
    clip=false,
    domain=0:1,
    samples=350,
]

\pgfmathsetmacro{\xstar}{sqrt(2)-1}

\addplot[name path=axiszero, draw=none] {0};
\addplot[name path=topcap, draw=none, domain=0:1] {min(x,0.25)};
\addplot[name path=Bone, draw=none, domain=0:1] {x*(1-x)^2/(1+x^2)};
\addplot[name path=Btwo, draw=none, domain=0:1] {x*(1-x)^2/((1+x)^2)};
\addplot[name path=Bthreeleft, draw=none, domain=0:\xstar] {x-2*x^2};

\addplot[draw=none, fill=myblue] fill between[of=axiszero and Btwo, soft clip={domain=0:1}];
\addplot[draw=none, fill=mymaroon] fill between[of=Btwo and Bthreeleft, soft clip={domain=0:\xstar}];
\addplot[draw=none, fill=mygreen] fill between[of=Bthreeleft and Bone, soft clip={domain=0:\xstar}];
\addplot[draw=none, fill=mygreen] fill between[of=Btwo and Bone, soft clip={domain=\xstar:1}];
\addplot[draw=none, fill=mypurple] fill between[of=Bone and topcap, soft clip={domain=0:1}];

\addplot[curvegray, line width=0.45pt, domain=0:1] {x*(1-x)^2/(1+x^2)};
\addplot[curvegray, line width=0.45pt, domain=0:1] {x*(1-x)^2/((1+x)^2)};
\addplot[curvegray, line width=0.30pt, domain=0:\xstar] {x-2*x^2};
\addplot[black, line width=0.75pt, domain=0:0.25] {x};

\node at (axis cs:0.12,0.178) {\(s=\xi\)};
\node[font=\scriptsize, fill=white, fill opacity=0.80, text opacity=1, inner sep=1.2pt, rounded corners=0.8pt]
    at (axis cs:0.56,0.083) {\(B_1(\xi)\)};
\node[font=\scriptsize, fill=white, fill opacity=0.80, text opacity=1, inner sep=1.2pt, rounded corners=0.8pt]
    at (axis cs:0.63,0.033) {\(B_2(\xi)\)};
\node[font=\scriptsize, fill=white, fill opacity=0.80, text opacity=1, inner sep=1.2pt, rounded corners=0.8pt]
    at (axis cs:0.305,0.114) {\(B_3(\xi)\)};
\node[font=\footnotesize, anchor=north east] at (axis cs:0,0) {\(0\)};
\node[anchor=west] at (axis description cs:1.018,-0.11) {\(\xi\)};
\node[anchor=east] at (axis description cs:-0.035,1.02) {\(s\)};

\end{axis}
\end{tikzpicture}
\caption{Regions induced by the cutoffs \(B_1(\xi)\), \(B_2(\xi)\), and \(B_3(\xi)\) in the \((\xi,s)\)-plane.}
\floatfoot{Blue: uniform information. Purple: full information. Maroon: uniform information for low \(\mu\), mixture information for high \(\mu\). Green: uniform information for low \(\mu\), full information for high \(\mu\).}
\label{fig:regions}
\end{figure}

\autoref{fig:regions} illustrates how the cutoffs \(B_1,B_2,B_3\) partition the \((\xi,s)\)-plane. The horizontal axis is the mean outside option, \(\xi\), and the vertical axis is the search cost, \(s\); the 45-degree line reflects the maintained assumption \(s<\xi\). The lower cutoff \(B_2\) and upper cutoff \(B_1\) identify the regions in which the same form of information is optimal for every prior mean match value \(\mu\): below \(B_2\), uniform information is optimal, while above \(B_1\), full information is optimal. Between \(B_2\) and \(B_1\), the optimal form of information provision also depends on \(\mu\). The cutoff \(B_3\) separates the two intermediate regions: in the green region, high \(\mu\) leads to full information; in the maroon region, high \(\mu\) leads to mixture information. In both intermediate regions, low \(\mu\) leads to uniform information.

When search costs are low, corresponding to the blue region in \autoref{fig:regions}, the prices at which safe demand can be generated are too low. Seller instead charges the higher price \(p^*>s/\xi\). At such a price, information cannot generate safe demand, so Seller uses uniform information for hedging. Here, information is noisy not because noise is intrinsically valuable, but because the affine segment of \(H\) makes demand insensitive to the unknown shape of Buyer's outside-option distribution.

When search costs are high, corresponding to the purple region, the restriction \(p\le s/\xi\) required to create safe demand still permits a relatively high price. Full information is then optimal because it maximizes safe demand: when the match value is high, Buyer receives posterior \(1\) and purchases without search for every possible outside-option distribution.\footnote{Subject to the constraint that the expected posterior equals the prior, no information provision policy can assign more than \(\mu\) probability to posterior \(1\). Full information attains this bound by assigning posterior \(1\) when \(x=1\) and posterior \(0\) when \(x=0\).} 
Because these protected purchases occur at the relatively high price \(p^{**}=s/\xi\), maximizing their probability through full information is more profitable than giving up some of them in exchange for return-after-search demand from intermediate posteriors.

For intermediate search costs, corresponding to the green and maroon regions, the optimal form of information provision depends on the prior mean match value \(\mu\). Uniform information allows Seller to charge the higher price \(p^*>s/\xi\), while full and mixture information create safe demand. When \(\mu\) is low, the product is unlikely to be a good match, so very favorable posteriors can arise only with limited probability. The amount of safe demand that can be created is therefore small, making the higher price under uniform information more valuable. When \(\mu\) is high, very favorable posteriors can occur often enough that the resulting safe demand is large enough to justify the lower price.

The maroon and green regions differ in what Seller does once safe demand is worth pursuing.
In the maroon region, \(\xi\) is relatively low, meaning that if Buyer searches, she is relatively likely to find the outside option unattractive. Return-after-search demand is therefore valuable. Mixture information combines the two roles: the mass point at posterior \(1\) generates safe demand, while the uniform part retains the hedging role over intermediate posteriors.
In the green region, \(\xi\) is relatively high, so return-after-search demand is less reliable. At high \(\mu\), Seller therefore gives up the return-after-search channel and uses full information to maximize safe demand.

The search-cost cutoffs are hump-shaped in \(\xi\) because a higher outside-option mean has two opposing effects. It lowers the highest price at which Seller can create safe demand, \(p^{**} = s/\xi\), making this move more costly. At the same time, it makes return-after-search demand less reliable, strengthening the value of safe demand. When \(\xi\) is relatively small, the first effect dominates; when \(\xi\) becomes large, the second effect takes over.

Finally, the ability to choose price has a sharp implication for information provision: unlike in many settings without prices or with exogenous prices, no information is \emph{always suboptimal}. Although no information may induce Buyer to purchase without search with probability one, doing so requires too large a price concession. In particular, full information paired with \(p^{**}=s/\xi\) strictly dominates no information.

\subsubsection{Two Benchmarks} \label{sss:2b}
The two key features of the main model are search frictions and Seller's robustness concerns. The following two benchmarks separate their roles, each retaining only one of these features.

\paragraph{The Role of Search Frictions: Zero Search Cost.} When search is costless, Buyer can inspect the outside option at no cost before deciding whether to buy. The search-deterrence role of information therefore disappears: a mass point ``at the top'' cannot protect demand from search. The hedging role remains, however. As a result, the robustly optimal selling strategy takes the same form as in the uniform-information region of \autoref{theorem:optimal}: Seller uses uniform information and charges the price \(p^*_0 \coloneqq \lim_{s \to 0} p^*\), where \(p^*\) is defined in \eqref{eq:no-det-price}.
This is formalized by \autoref{zsb}; see \Cref{zsc}.

\paragraph{The Role of Robustness Concerns: Known Outside-Option Distribution.}
Now suppose that the outside-option distribution \(G\) is known to Seller, eliminating the robustness concerns.\footnote{For simplicity, assume that \(G\) has full support and admits a strictly log-concave density \(g\).} Search frictions remain, but the hedging motive disappears because there is no adversarial choice of \(G\) to hedge against. In this benchmark, full information is always optimal. This is formalized by \cref{proposition:bayesian}; see \Cref{boo}.

Intuitively, inducing an intermediate posterior \(w\in(0,1)\) amounts to pooling the high and low match-value realizations. 
Relative to full information, this can make sales possible even when the match value is low, but only by giving Buyer a lower expected value than she would have after learning that the match value is high.
Seller must therefore either charge less to induce purchase without search or, at a given price, accept a lower probability that Buyer returns after search. When \(G\) is known to Seller, these losses dominate the benefit from such pooling. Consequently, Seller cannot improve on full information by inducing intermediate posteriors. Thus, with known \(G\), search frictions affect the source of revenue --- purchases without search or purchases after search --- but not the optimality of full information.

Robustness concerns change this conclusion. In the known-\(G\) benchmark, Seller's strategy determines return-after-search demand; in the main model, that demand is determined adversarially through \(G\). When the price is high enough that Buyer may still search even after learning that the match value is high, i.e., when \(p>s/\xi\), uniform information can outperform full information because its affine segment pins down demand rather than letting it vary with the adversarially chosen \(G\). 
When \(p\le s/\xi\), learning that the match value is high creates safe demand, but this price bound limits the revenue from such purchases.
It may therefore be profitable to give up some safe demand in exchange for return-after-search demand from the affine segment. Mixture information can then outperform full information.

\subsection{Comparative Statics} \label{ss:cs}

I conclude this section by discussing comparative statics of the robust price, the robust information provision policy, and the revenue guarantee, together with an implication for Buyer welfare.

\begin{theorem}[Comparative statics] \label{theorem:cs} ~
	\begin{itemize}
		\item [(i)] The robust price \(p_r\) is non-monotone in the search cost \(s\): holding \(\mu\) and \(\xi\) fixed, there exists a cutoff \(\hat{s}\in(0,\xi)\) such that \(p_r\) is increasing on \((0,\hat{s})\) and \((\hat{s},\xi)\), but \( p_r(\hat{s}-) > p_r(\hat{s}+)\). 
        \item[(ii)] For any \(s_1 < s_2\), the robust information provision policy corresponding to \(s_2\) is more informative (in the Blackwell sense) than that corresponding to \(s_1\), unless \(s_1, s_2 \in (B_2(\xi), B_3(\xi))\) and \(\mu\) is sufficiently large.
		\item [(iii)] Seller's revenue guarantee is strictly increasing in \(s\), strictly decreasing in \(\xi\), and increasing in \(\mu\).
	\end{itemize}
\end{theorem}

Part (i) of \autoref{theorem:cs} states that the robust price is not always increasing in the search cost; this is illustrated in \autoref{fig:price-cs}. A higher search cost weakens Buyer's search option, so one might expect Seller to raise the price. This force is present: both \(p^*\), the robust price under uniform information, and \(p^{**}=s/\xi\), the robust price under a deterrence policy, increase with \(s\). The non-monotonicity comes from the switch between uniform information and a deterrence policy. When \(s\) is low, Seller uses uniform information and charges \(p^*>s/\xi\). As \(s\) rises, \(p^{**}\) grows faster than \(p^*\), so the price concession required to generate safe demand becomes smaller, eventually making it more attractive to give up the higher price \(p^*\) in exchange for the safe demand generated by a deterrence policy. Once a deterrence policy becomes optimal, the price drops from \(p^*\) to \(p^{**}\), generating the downward jump in the robust price.\footnote{\cite{wang2017advertising} obtains a related downward price jump in a model in which Buyer searches for more precise information about Seller's product. There, information provision is restricted to a one-parameter truth-or-noise family, as in \cite{lewis1994supplying}. Seller switches from high-price partial information, under which Buyer buys immediately only after favorable posteriors, to no information and a lower price that deters search altogether. \cite{koessler2024sources} show that the jump disappears once arbitrary information provision policies are allowed, because Seller can deter search on path while changing the price continuously. My model differs because Buyer searches over an outside option. Information provision affects Buyer's belief about the match value, but cannot reveal the outside option; thus, even a favorable posterior need not prevent search unless the price is low enough. The relevant switch is therefore from a high price at which favorable posteriors do not create safe demand to a lower price at which they do.}

\begin{figure}
    \centering
    \begin{tikzpicture}[xscale=14.7,yscale=9,>={Stealth[length=2mm,width=1.5mm]}]
        \draw [<->, thick] (0,0.55) node[left]{\(p_r\)} -- (0,0.1) node[below left]{\(O\)} -- (0.3,0.1) node[below]{\(s\)};
        \draw[orange, ultra thick, domain=0:0.1] plot (\x, {(1-sqrt(1-2*\x-(0.5-\x)*(0.5-\x)))/(0.5+\x)});
        \draw[orange, ultra thick, domain=0.1:0.25] plot (\x, {\x*2});
        \draw[dashed, thick] (0.1, 0.1) node[below]{\(\hat{s}\)} -- (0.1, 0.55);
    \end{tikzpicture}
    \caption{The robust price is non-monotone in the search cost.} 
    \floatfoot{When \(0 < s < \hat{s}\), uniform information is optimal; when \(\hat{s} \le s < \xi\), Seller uses a deterrence policy.}
    \label{fig:price-cs}
\end{figure}

Part (ii) asserts that the robust information provision policy generally becomes more informative as the search cost increases.
The high-level intuition is that a higher search cost makes favorable posterior realizations more useful to Seller.
As long as uniform information remains optimal, a higher search cost raises the robust price \(p^*\); the associated uniform policy then becomes more informative because demand must be supported by more favorable posterior realizations. 
If a higher search cost leads Seller to switch to a deterrence policy, informativeness also rises because the policy reveals that the match value is high with strictly positive probability.\footnote{The qualifier ``generally'' refers to the exceptional case stated in \autoref{theorem:cs}, where the robust information provision policies corresponding to both \(s_1\) and \(s_2\) are mixture-information policies. In this case, a higher search cost raises the robust price \(p^{**}=s/\xi\), which is also the lower endpoint of the affine part of mixture information. Holding the mean fixed, this reduces the mass point at posterior \(1\) and makes the distribution over posteriors less spread out, so the robust information provision policy becomes less informative in the Blackwell sense.}

Part (iii) has a simple economic intuition. A higher search cost weakens Buyer's search option and therefore raises the revenue guarantee. A higher mean \(\xi\) of the outside option makes Buyer's alternative more attractive on average, lowering the revenue guarantee. Finally, a higher prior mean match value \(\mu\) makes favorable posterior realizations easier to generate and raises the revenue Seller can guarantee.

Together, Parts (i) and (ii) of \autoref{theorem:cs} imply that easier search need not benefit Buyer. To make this statement precise, fix \(\mu\), \(\xi\), and an outside-option distribution \(G\in\mathcal M(\xi)\), and vary the search cost. Let \(W_G(s)\) denote Buyer's ex ante expected payoff under the robustly optimal selling strategy at search cost \(s\), and let \(W_G(\hat{s}-)\) and \(W_G(\hat{s}+)\) denote its one-sided limits at the cutoff in Part (i).

\begin{corollary}[Buyer welfare]\label{cor:buyer-welfare}
Fix \(\mu\), \(\xi\), and \(G\in\mathcal{M}(\xi)\). At the cutoff \(\hat{s}\) in Part (i) of \autoref{theorem:cs},
\[
W_G(\hat{s}+)>W_G(\hat{s}-).
\]
Consequently, if \(s_1<\hat{s}<s_2\) are sufficiently close to \(\hat{s}\), then \(W_G(s_2)>W_G(s_1)\).
\end{corollary}

The result is driven by the discontinuous change in Seller's strategy at \(\hat{s}\), the cutoff at which Seller switches to a deterrence policy. Evaluated at the common search cost \(\hat{s}\), the strategy just above the cutoff gives Buyer a higher payoff because it charges a strictly lower price and provides more informative product-fit information than the strategy just below it. By continuity, this gain persists for search costs sufficiently close on opposite sides of the cutoff, despite the direct cost of the higher search cost.

\section{Conclusion} \label{s:discussion}
In this paper, I studied a seller launching a novel product who sets a price and provides product-fit information before the buyer decides whether to search for an outside option. The seller is uncertain about the outside-option distribution and seeks strategies robust to this uncertainty. In this setting, price and information cannot be understood separately: the price determines both the margin from a sale and how demanding it is to make buying without search attractive, while the information determines, at that price, whether and to what extent the buyer is induced to buy without search, and how effectively the seller hedges against uncertainty about the outside-option distribution. As search costs vary, optimal strategies combine these forces in different ways, generating distinct price--information patterns across new products.
The comparative statics also caution against a simple pro-consumer view of easier search: lower search costs can raise prices, make information noisier, and leave consumers worse off.

\begin{singlespace}
\bibliographystyle{ecta}
\bibliography{robust.bib}
\end{singlespace}

\begin{appendices}
\appendix
\crefalias{section}{appendix}
\crefalias{subsection}{appendix}
\crefalias{subsubsection}{appendix}
\renewcommand\thefigure{\thesection.\arabic{figure}}
\setcounter{figure}{0}    
\section{Proofs and Omitted Details} \label{prf}
\begin{figure}
	\centering
    \begin{minipage}{0.32\textwidth}
        \centering
        \begin{tikzpicture}[scale=0.62,>={Stealth[length=2mm,width=1.5mm]}]
            \draw[<->, thick] (0,5) node[left]{\small \(H\)} -- (0,0) -- (6,0) node[below]{\small \(w\)};
            \node[left] at (0,4.5) {\small 1};
            \draw[thick] (1.25,0) node [below]{\small \(p\)} -- (1.25,0.2);
            \draw[ultra thick, Blue] (0,0) -- (2,0) -- (4.5, 4.5) -- (5.5, 4.5);
            \draw[thick, dashed] (4.5, 4.5) -- (4.5, 0) node[below right]{\small 1};
            \node[above left, Blue] at (3.7,2.8) {\small \(U_{[2\mu-1,1]}\)};
        \end{tikzpicture}
    \end{minipage}\hfill
    \begin{minipage}{0.32\textwidth}
        \centering
        \begin{tikzpicture}[scale=0.62,>={Stealth[length=2mm,width=1.5mm]}]
            \draw[<->, thick] (0,5) node[left]{\small \(H\)} -- (0,0) -- (6,0) node[below]{\small \(w\)};
            \node[left] at (0,4.5) {\small 1};
            \draw[ultra thick, BurntOrange] (0,0) -- (1.25,0) node[below, black]{\small \(p\)} -- (3.5, 4.5) -- (5.5, 4.5);
            \draw[ultra thick, WildStrawberry] (0,0.5) -- (1.25,0.5) -- (4, 4.5) -- (5.5, 4.5);
            \draw[ultra thick, Fuchsia] (0,1) -- (1.25,1) -- (4.5, 4.5) -- (5.5, 4.5);
            \draw[thick, dashed] (4.5, 4.5) -- (4.5, 0) node[below]{\small 1};
        \end{tikzpicture}
    \end{minipage}\hfill
    \begin{minipage}{0.32\textwidth}
        \centering
        \begin{tikzpicture}[scale=0.62,>={Stealth[length=2mm,width=1.5mm]}]
            \draw[<->, thick] (0,5) node[left]{\small \(H\)} -- (0,0) -- (6,0) node[below]{\small \(w\)};
            \node[left] at (0,4.5) {\small 1};
            \draw[ultra thick, ForestGreen] (0,0) -- (1.15,0) node[below, black]{\small \(p\)} -- (2.8,2.5) -- (3.5,2.5) -- (4.5,4.5) -- (5.5,4.5);
            \draw[thick, dashed] (4.5,4.5) -- (4.5,0) node[below]{\small 1};
        \end{tikzpicture}
    \end{minipage}
    \caption{Distributions that can be optimal when \(p > s/\xi\).} 
    \floatfoot{The left panel corresponds to \(\mu > (1+p)/2\). The middle panel corresponds to \(\mu \le (1+p)/2\) and \(p \ge 2s\): the orange curve depicts an optimal distribution with \(\bar{w} < 1\) and no mass point at \(w = 0\), the pink curve one with \(\bar{w} < 1\) and a mass point at \(w = 0\), and the violet curve one with \(\bar{w} = 1\) and a mass point at 0. The right panel gives an indicative distribution for \(\mu \le (1+p)/2\) and \(s/\xi < p < 2s\): its CDF may have two increasing affine segments, one beginning at \(p\) and the other ending at \(1\), with a constant segment between them.}
    \label{hpm}
\end{figure}

\subsection{Omitted Details} \label{ss:omitted-details}
In the definition of \(p^*\) (see \eqref{eq:no-det-price}), \(\ubar{\mu}\) and \(\bar{\mu}\) are given by 
\[
\ubar{\mu} \coloneqq \frac{2-\xi+s-\sqrt{2(\xi-s)-(\xi-s)^{2}}}{2\left(1-\xi+s\right)}, \text{ and } \bar{\mu} \coloneqq 1-\frac{\sqrt{\xi-s}}{2},
\]
respectively. 

Formally, Seller uses \textbf{uniform information} if the distribution over posteriors is \(U_{[2\mu-1,1]}\) when \(\mu > 1-\left(\sqrt{\xi-s}/2\right)\), and it is
    \[
    H^*_h(w)=\left\{\begin{array}{cc}1-\frac{2 \mu}{1+p^*} & \text{if } w \in[0, p^*) \\ 1-\frac{2 \mu}{1-{p^*}^{2}}(1-w) & \text{if } w \in[p^*, 1]\end{array}\right.
    \]
when \(\mu \le 1-\left(\sqrt{\xi-s}/2\right)\). Seller uses \textbf{mixture information} if the distribution over posteriors is
\begin{equation} \label{huh}
    H^*_u(w) = \left\{\begin{array}{ll} 0 & \text{if } w \in[0, p^{**}), \\ \frac{2\xi^2\left(1-\mu\right)}{\left(\xi-s\right)^2}(w-p^{**}) & \text{if } w \in[p^{**}, 1), \\ 1 & \text{if } w = 1. \end{array}\right.
\end{equation}
Finally, Seller uses \textbf{full information} if the distribution over posteriors is 
\begin{equation} \label{eq:binary-pd}
    H_b^*(w) = \begin{cases} 1-\mu & w \in[0, 1), \\ 1 & w =1. \end{cases}
\end{equation}

\subsection{Proof of \autoref{theorem:optimal}}

\subsubsection{Optimal Information Provision Policy for a Fixed Price} \label{oip}
For any fixed price \(p \in [0,1]\), Seller's optimal information provision policy is summarized in \autoref{ppl} and illustrated in \autoref{hpm} and \autoref{lpm}.

\begin{proposition} \label{ppl} 
    First, suppose \(p\geq1-(\xi-s)\). Then every distribution over posteriors is robustly optimal, and \(\Phi(p)=0\). Next, for every \(p\in(s/\xi,2s]\), \(\Phi(p)<\mu s/\xi\); hence, every selling strategy with such a price is strictly dominated by full information at price \(p^{**}=s/\xi\).\footnote{The exact robustly optimal distributions for \(s/\xi<p\leq2s\) are characterized in \href{https://kunzhang.org/file/roboa.pdf}{Supplementary Appendix} D.} It remains to characterize the cases \(p\leq s/\xi\) and \(\max\{s/\xi,2s\}<p<1-(\xi-s)\).

    Consider first \(\max\{s/\xi,2s\}<p<1-(\xi-s)\). If \(\mu>(1+p)/2\), \(U_{[2\mu-1,1]}\) is robustly optimal.\footnote{\(U_{[a,b]}\) denotes the uniform distribution on \([a,b]\).} Suppose instead that \(\mu\leq(1+p)/2\), and define
\[
\bar w\coloneqq
\begin{cases}
1,&\text{if } \xi-s>\frac{(1-p)^2}{2},\\[6pt] \max\!\left\{2\mu-p,\,p+\xi-s+\sqrt{(\xi-s)(\xi-s+2p)}\right\},&\text{if } \xi-s\leq\frac{(1-p)^2}{2}.
\end{cases}
\]
Then
\begin{equation}\label{dhm}
H_{\bar w}(\cdot\mid p)=\left(1-\frac{2\mu}{p+\bar w}\right)\delta_0+\frac{2\mu}{p+\bar w}U_{[p,\bar w]}
\end{equation}
is robustly optimal.

Now suppose that \(p\leq s/\xi\). If \(\mu\geq p+1-s/\xi\), the degenerate distribution \(\delta_\mu\) is robustly optimal. Suppose instead that \(\mu<p+1-s/\xi\). If \(p\geq(1-2\xi)(\xi-s)/(2\xi^2)\), the binary distribution supported on \(\{0,p+1-s/\xi\}\) is robustly optimal. If \(p<(1-2\xi)(\xi-s)/(2\xi^2)\), there are two cases:
\begin{itemize}[noitemsep,topsep=0pt]
\item[(i)] If \(p+(1-s/\xi)/2\leq\mu<p+1-s/\xi\), the distribution
\begin{equation}\label{hhu}
H^h_u(w\mid p)=
\begin{cases}
0,&w\in[0,p),\\
\frac{2\left[\xi^2(p+1-\mu)-s\xi\right]}{(\xi-s)^2}(w-p),&w\in[p,p+1-s/\xi),\\
1,&w\in[p+1-s/\xi,1]
\end{cases}
\end{equation}
is robustly optimal.
\item[(ii)] If \(\mu<p+(1-s/\xi)/2\), there exists \(\bar w\in[2\mu-p,p+1-s/\xi)\) such that \(H_{\bar w}(\cdot\mid p)\), defined in \eqref{dhm}, is robustly optimal.
\end{itemize}
\end{proposition}

\begin{figure}
    \centering
    \begin{minipage}{0.33\textwidth}
        \centering
        \begin{tikzpicture}[scale=0.62,>={Stealth[length=2mm,width=1.5mm]}]
            \draw[<->, thick] (0,5) node[left]{\small \(H\)} -- (0,0) -- (6,0) node[below]{\small \(w\)};
            \draw[thick, dashed] (4, 4.5) -- (0, 4.5) node[left]{\small 1};
            \draw[ultra thick, Blue] (0,0) -- (1,0) node[below, black]{\small \(p\)} -- (4, 3.25);
            \draw[ultra thick, Blue] (4, 4.5) -- (5.5, 4.5);
            \draw[ultra thick, Cerulean] (0, 1.3) -- (4, 1.3);
            \draw[thick, dashed] (4, 4.5) -- (4, 0);
            \node [below, black] at (3.6, 0) {\small \(p+1-s/\xi\)};
        \end{tikzpicture}
    \end{minipage}\hfill
    \begin{minipage}{0.33\textwidth}
        \centering
        \begin{tikzpicture}[scale=0.62,>={Stealth[length=2mm,width=1.5mm]}]
            \draw[<->, thick] (0,5) node[left]{\small \(H\)} -- (0,0) -- (6,0) node[below]{\small \(w\)};
            \node[left] at (0,4.5) {\small 1};
            \draw[ultra thick, BurntOrange] (0,0) -- (1,0) node[below, black]{\small \(p\)} -- (3.5, 4.5) -- (5.5, 4.5);
            \draw[ultra thick, WildStrawberry] (0,0.6) -- (1,0.6) -- (4, 4.5) -- (5.5, 4.5);
            \draw[thick, dashed] (4.2, 4.5) -- (4.2, 0);
            \draw[ultra thick, Cerulean] (0, 2.2) -- (4.2, 2.2);
            \node [below, black] at (3.8, 0) {\small \(p+1-s/\xi\)};
        \end{tikzpicture}
    \end{minipage}\hfill
    \begin{minipage}{0.33\textwidth}
        \centering
        \begin{tikzpicture}[scale=0.62,>={Stealth[length=2mm,width=1.5mm]}]
            \draw[<->, thick] (0,5) node[left]{\small \(H\)} -- (0,0) -- (6,0) node[below]{\small \(w\)};
            \draw[ultra thick, ForestGreen] (0,0) -- (4,0) node[above right, black]{\small \(\mu\)};
            \draw[ultra thick, ForestGreen] (4,4.5) -- (5.5, 4.5);
            \draw[thick, dashed] (4, 4.5) -- (4, 0);
            \draw[thick, dashed] (4, 4.5) -- (0, 4.5) node[left]{\small 1};
            \draw[thick] (0.75, 0.15) -- (0.75,0) node[below]{\small \(p\)};
            \draw[thick] (3, 0.15) -- (3,0);
            \node [below, black] at (2.6, 0) {\small \(p+1-s/\xi\)};
        \end{tikzpicture}
    \end{minipage}

    \caption{Distributions that can be optimal when \(p \le s/\xi\).}
    \floatfoot{The left panel corresponds to the case \(p+1-s/\xi > \mu \ge p+(1-s/\xi)/2\); the blue curve is \(H^h_u\), and the light blue curve is the binary distribution. The middle panel corresponds to \(\mu < p+(1-s/\xi)/2\); again the light blue curve is the binary distribution, and the orange and pink curves correspond to \(H_{\bar w}\) without and with a mass point at \(0\), respectively. The right panel depicts the case \(\mu \ge p+1-s/\xi\).}
    \label{lpm}
\end{figure}

In what follows, I outline the idea behind the proof of \autoref{ppl}; a formal proof can be found in \href{https://kunzhang.org/file/roboa.pdf}{Supplementary Appendix} D. 

Define Buyer's \textbf{effective outside option} as \(z \coloneqq \min\{v,a\}\), and let \(\hat{G}\) denote its CDF.\footnote{That is, \[\hat{G}(z) \coloneqq \left\{\begin{array}{ll} G(z) & \text{if } z < a, \\ 1 & \text{if } z \ge a. \end{array}\right. \vspace{-5pt}\]} It can be shown that \(z \in [0, 1-s/\xi]\), and \(\mathbb{E}_{\hat{G}}[z] = \xi - s\). 
Nature's choice of outside-option distribution affects Seller's payoff only through the induced effective outside-option distribution.\footnote{This is first observed by \cite{a17} and \cite{cdk}.} Indeed, using \eqref{eq:exp-revenue}, Seller's demand can be written as \(\Psi\left(p,H~\big|~\hat G\right)\coloneqq\mathbb E_{\hat G}[1-H(p+z)]\).
Let \(\widehat{\mathcal{M}}(\xi,s)\) denote the set of distributions on \([0,1-s/\xi]\) with mean \(\xi-s\). 
Every effective outside-option distribution generated by some outside-option distribution \(G \in \mathcal{M}(\xi)\) belongs to \(\widehat{\mathcal{M}}(\xi,s)\), although the converse need not hold. I therefore first consider the auxiliary problem
\begin{equation} \label{emm}
\max_{H \in \mathcal{M}(\mu)} \min_{\hat{G} \in \widehat{\mathcal{M}}(\xi,s)} \Psi\left(p, H ~\big|~ \hat{G}\right).
\end{equation}
Because problem \eqref{emm} enlarges Nature's choice set, its value is a lower bound on the demand Seller can guarantee at price \(p\) in the original problem. I solve it by constructing a saddle point of the zero-sum game in which Seller chooses \(H\) to maximize \(\Psi\) and Nature chooses an effective outside-option distribution \(\hat{G}\) to minimize it.

Given Seller's choice of \((p,H)\), Nature's problem in \eqref{emm} is equivalent to\footnote{The maximum in \eqref{nip} is attained: the set \(\widehat{\mathcal M}(\xi,s)\) is weakly compact, and \(z\mapsto H(p+z)\) is upper semicontinuous.}
\begin{equation} \label{nip}
\max_{\hat{G} \in \widehat{\mathcal{M}}(\xi,s)} \int_{0}^{1-\frac{s}{\xi}} H(p+z) \, \mathrm{d}\hat{G}(z).
\end{equation}
Define 
\[G_p(w)\coloneqq\begin{cases}0,&w\leq p,\\ \hat G((w-p)-),&w>p.\end{cases}\]
By Fubini's theorem, Seller's problem, taking Nature's choice as given, can be written as 
\begin{equation}
\label{sip}
\sup_{H\in\mathcal M(\mu)} \int_0^1G_p(w)\,\mathrm dH(w).
\end{equation}
Consequently, the following best-response test identifies a saddle point.
\begin{lemma} \label{csp}
For a fixed \(p\), \(\left(H^*,\hat{G}^*\right)\) is a saddle point of the auxiliary problem \eqref{emm} if and only if
\[
H^* \in \argmax_{H \in \mathcal{M}(\mu)} \int_{0}^{1} G^*_p(w) \, \mathrm{d} H(w), \quad \text{and} \quad \hat{G}^* \in \argmax_{\hat{G} \in \widehat{\mathcal{M}}(\xi,s)} \int_{0}^{1-\frac{s}{\xi}} H^*(p+z) \, \mathrm{d}\hat{G}(z),
\]
where \(G_p^*\) is constructed from \(\hat G^*\) as above.
\end{lemma}

By Corollary 2 in \cite{kg11}, the value of \eqref{nip} is the concave hull of the function \(z\mapsto H(p+z)\) on \([0,1-s/\xi]\), evaluated at its mean \(\xi-s\).\footnote{For a function \(f\) on a compact interval, its \textbf{concave hull}, denoted by \(\widetilde f\), is the smallest upper-semicontinuous concave function that majorizes \(f\).} Denote its value by \(\widetilde{H}(p+\cdot)(\xi-s)\).
Similarly, the supremum in \eqref{sip} is \(\widetilde{G}_p(\mu)\). A maximizing distribution, when one exists, can be supported on points at which the relevant function equals its concave hull.

To solve the auxiliary problem, I first guess a candidate distribution over posteriors \(H^*\) and compute the concave hull of \(z\mapsto H^*(p+z)\). Nature's best responses are precisely the distributions \(\hat G \in \widehat{\mathcal M}(\xi,s)\) satisfying
\[
\int_0^{1-s/\xi}H^*(p+z)\,\mathrm d\hat G(z)
=\widetilde{H^*}(p+\cdot)(\xi-s).
\]
I then construct such a distribution \(\hat G^*\) for which
\(\int_0^1G_p^*(w)\,\mathrm dH^*(w) =\widetilde{G_p^*}(\mu)\),
so that \(H^*\) is also a best response to \(\hat G^*\). By
\autoref{csp}, the resulting pair is a saddle point of the auxiliary problem \eqref{emm}. Constructing these saddle points across all parameter regions is a central difficulty in proving \autoref{ppl}. Both the candidate distribution over posteriors and Nature's corresponding best response vary across regions; in some cases, the proof must compare a family of candidates and their mixtures rather than verify a single candidate. 

Finally, it remains to connect the auxiliary problem back to Seller's original fixed-price problem. \href{https://kunzhang.org/file/roboa.pdf}{Supplementary Appendix} D shows that \(\hat G^*\) can be induced by a lottery over feasible outside-option distributions. A lottery is sufficient because allowing Nature to randomize does not change the value of her problem. As noted in \Cref{fn:tie-breaking}, the infimum defining Seller's revenue guarantee need not be attained; this possibility is handled with a limiting argument. Together, these arguments close the relaxation and imply that \(H^*\) is robustly optimal in the original problem. This verification is involved but mostly technical, and hence I omit the details here. 

\subsubsection{Two Preliminary Results Regarding the Robust Price}

I first establish two preliminary results, \autoref{claim:high-price} and \autoref{claim:low-price}, that concern the cases of \(p > s/\xi\) and \(p \le s/\xi\), respectively.

\begin{claim}\label{claim:high-price}
Let
\begin{equation}\label{php}
\Pi_h\coloneqq
\begin{cases}
\mu\left(1-\sqrt{2(\xi-s)-(\xi-s)^2}\right),&\mu\leq\ubar{\mu},\\
(2\mu-1)\left(1-\dfrac{\xi-s}{2-2\mu}\right),&\ubar{\mu}<\mu\leq\bar{\mu},\\
\left(1-\sqrt{\xi-s}\right)^2,&\mu>\bar{\mu}.
\end{cases}
\end{equation}
For every \(p\) satisfying \(\max\left\{s/\xi,2s\right\}<p<1-(\xi-s)\), Seller's revenue guarantee satisfies \(\Phi(p)\leq\Pi_h\). 
If \(\Pi_h>\mu s/\xi\), then
\[
\max\left\{\frac{s}{\xi},2s\right\}<p^*<1-(\xi-s),
\]
and \(p^*\) is optimal among prices above \(s/\xi\), with \(\Phi(p^*)=\Pi_h\). At price \(p^*\), if \(\mu>1-\sqrt{\xi-s}/2\), the distribution \(U_{[2\mu-1,1]}\) is robustly optimal. If \(\mu\leq1-\sqrt{\xi-s}/2\), the distribution in \eqref{dhm} with \(p=p^*\) and \(\bar w=1\) is robustly optimal.
\end{claim}

\begin{proof}
Consider any price \(p\) satisfying \(\max\{s/\xi,2s\}<p<1-(\xi-s)\). Writing
\[
t\coloneqq\frac{\xi-s+\sqrt{(\xi-s)(\xi-s+8\mu)}}{4},
\]
Seller's revenue guarantee can be calculated using \autoref{ppl}:
	\begin{equation*}
		\Phi(p) = \left\{\begin{array}{ll} p\left(1-\frac{\xi-s}{1-p}\right) & \text{if } p < 2\mu-1, \\ p\left[1-\frac{\xi-s}{2(\mu-p)}\right]
 & \text{if } \mu - t > p \ge 2\mu-1, \\ \frac{p\mu}{\sqrt{(\xi-s)(\xi-s+2p)}+(\xi-s+p)} & \text{if } 1-\sqrt{2\left(\xi-s\right)} > p \ge \max\{\mu-t, 2\mu-1\}, \\ \frac{2\mu p}{1+p}\left(1-\frac{\xi-s}{1-p}\right) & \text{if } p \ge \max\{1-\sqrt{2\left(\xi-s\right)},2\mu-1\}, \end{array}\right.
	\end{equation*}
The second expression is increasing in \(p\) and equals the third expression at \(p=\mu-t\). The third expression is also increasing in \(p\) and equals the fourth expression at \(p=1-\sqrt{2(\xi-s)}\). Therefore, neither the second nor the third expression can yield a value above the maximum of the fourth expression. To obtain an upper bound, I temporarily drop the two restrictions \(p>\max\{s/\xi,2s\}\) and \(p<1-(\xi-s)\) that define the price interval in the claim. I then maximize the first expression subject to \(p\leq2\mu-1\) and the fourth expression subject to \(p\geq\max\{1-\sqrt{2(\xi-s)},2\mu-1\}\). 
The resulting maximizers are
\[
\begin{cases}
1-\sqrt{\xi-s},&\mu\geq\bar{\mu},\\
2\mu-1,&\mu<\bar{\mu},
\end{cases}
\qquad\text{and}\qquad
\begin{cases}
\frac{1-\sqrt{2(\xi-s)-(\xi-s)^2}}{1-\xi+s},&\mu\leq\ubar{\mu},\\
2\mu-1,&\mu>\ubar{\mu}.
\end{cases}
\]
Evaluating the first and fourth expressions at these maximizing prices and comparing the resulting values gives precisely the three branches of \(p^*\) in \eqref{eq:no-det-price}; the larger of the two values is \(\Pi_h\) in \eqref{php}. Thus, the first and fourth expressions cannot exceed \(\Pi_h\), while the preceding argument shows that the second and third expressions cannot exceed the maximum of the fourth expression. Therefore, \(\Phi(p)\leq\Pi_h\) for every \(\max\left\{s/\xi,2s\right\}<p<1-(\xi-s)\).

Now suppose that \(\Pi_h>\mu s/\xi\). Direct algebra indicates that \(p^* > s/\xi\) and \(p^* < 1-(\xi-s)\); it remains to show that \(p^*>2s\). First consider \(\mu\leq\bar{\mu}\), and let \(H^*\) be the distribution in \eqref{dhm} with \(p=p^*\) and \(\bar w=1\). This distribution is feasible on the first two branches of \eqref{eq:no-det-price}. For every \(\hat G\in\widehat{\mathcal M}(\xi,s)\), Jensen's inequality and \(\mathbb E_{\hat G}[z]=\xi-s\) give
\[
p^*\Psi(p^*,H^*\mid\hat G)=\frac{2\mu p^*}{1+p^*}\mathbb E_{\hat G}\left[\max\left\{1-\frac{z}{1-p^*},0\right\}\right]\geq\frac{2\mu p^*}{1+p^*}\left(1-\frac{\xi-s}{1-p^*}\right)=\Pi_h.
\]
If instead \(\mu>\bar{\mu}\), then \(p^*=1-\sqrt{\xi-s}<2\mu-1\), and
\(1-U_{[2\mu-1,1]}(p^*+z)\geq\max\left\{1-z/(1-p^*),0\right\}\).
Hence,
\[
p^*\Psi\!\left(p^*,U_{[2\mu-1,1]}\mid\hat G\right)\geq p^*\mathbb E_{\hat G}\left[\max\left\{1-\frac{z}{1-p^*},0\right\}\right]\geq p^*\left(1-\frac{\xi-s}{1-p^*}\right)=(p^*)^2=\Pi_h.
\]
Because the auxiliary problem gives a lower bound on Seller's original fixed-price problem, these calculations imply \(\Phi(p^*)\geq\Pi_h\). If \(p^*\leq2s\), then \(s/\xi<p^*\leq2s\), and \autoref{ppl} would imply \(\Phi(p^*)<\mu s /\xi\),
contradicting \(\Phi(p^*)\geq\Pi_h>\mu s/\xi\). Therefore, \(p^*>2s\), and hence
\[
\max\left\{\frac{s}{\xi},2s\right\}<p^*<1-(\xi-s).
\]
Thus, \(\Phi(p^*)=\Pi_h\).

Finally, \autoref{ppl} gives \(\Phi(p)<\mu s/\xi<\Pi_h\) whenever \(s/\xi<p\leq2s\), and \(\Phi(p)=0<\Pi_h\) whenever \(p\geq1-(\xi-s)\). Together with the bound on the remaining interval, this proves that \(p^*\) is optimal among prices above \(s/\xi\). The distributions used to establish the lower bound at \(p^*\) attain \(\Pi_h\) and are therefore robustly optimal.
\end{proof}

\begin{claim} \label{claim:low-price}
	Assume \(p \le s/\xi\). The optimal price in this region is always \(p^{**}=s/\xi\). Furthermore, if \(s \ge B_3(\xi) = \xi - 2 \xi^2\), the optimal distribution over posteriors is the binary distribution \(H^*_b\) defined in \eqref{eq:binary-pd}, and the revenue guarantee is \(\Pi = \mu s / \xi\).
	
	Suppose instead \(s < B_3(\xi)\). 
    If \(\mu\geq p^{**}+(1-s/\xi)/2=(1+s/\xi)/2\), the optimal distribution over posteriors is \(H_u^*\) defined in \eqref{huh}, and the revenue guarantee is
	\begin{equation} \label{rtd}
		\Pi = \frac{s}{\xi}-\frac{2 s \xi \left(1-\mu\right)}{\xi-s}.
	\end{equation}
    If \(\mu<(1+s/\xi)/2\), the optimal distribution over posteriors is \(H_{\bar w}(\cdot\mid p^{**})\), which is \eqref{dhm} evaluated at \(p^{**}\), and Seller's revenue guarantee is
     \begin{equation} \label{fke}
     	\Pi = \left\{\begin{array}{ll} p^{**}\left[1-\frac{\xi-s}{2(\mu-p^{**})}\right]
 	& \text{if } \mu > p^{**} \text{ and } \xi - s \le \frac{2(\mu-p^{**})^2}{2 \mu - p^{**}}, \\ \frac{p^{**}\mu}{\sqrt{(\xi-s)(\xi-s+2p^{**})}+(\xi-s+p^{**})} & \text{otherwise}. \end{array}\right.
     \end{equation}
\end{claim}

\begin{proof}[Proof of \autoref{claim:low-price}]
If \(p\geq(1-2\xi)(\xi-s)/(2\xi^2)\), then by \autoref{ppl}, the binary distribution supported on \(\{0,p+1-s/\xi\}\) is robustly optimal whenever \(p>\mu+s/\xi-1\). For such a price, Seller's revenue guarantee is 
\[p\frac{\mu}{p+1-s/\xi}=\frac{p\mu\xi}{\xi(1+p)-s},\] 
which is strictly increasing in \(p\) on this price range. If instead \(0\leq p\leq\mu+s/\xi-1\), \autoref{ppl} implies that the degenerate distribution \(\delta_\mu\) is robustly optimal, and Seller's revenue guarantee is \(p\). At the upper endpoint of this range, it is \(\mu+s/\xi-1\). Full information at price \(p^{**}\) yields \(\mu s/\xi\), and 
\[\left(\mu+\frac{s}{\xi}-1\right)-\frac{\mu s}{\xi}=\left(1-\frac{s}{\xi}\right)(\mu-1)<0.\] 
Therefore, every strategy involving the degenerate distribution is dominated by full information at price \(p^{**}\) and can be ignored henceforth.

When \(p<(1-2\xi)(\xi-s)/(2\xi^2)\) and \(p>\mu-(1-s/\xi)/2\), again by \autoref{ppl}, \(H_{\bar w}\) defined in \eqref{dhm} is robustly optimal.\footnote{If \((1-2\xi)(\xi-s)/(2\xi^2)<\mu-(1-s/\xi)/2\), then \(H_{\bar w}(\cdot\mid p^{**})\) is never optimal.} An argument similar to that in the proof of \autoref{claim:high-price} shows that Seller's revenue guarantee is strictly increasing in \(p\) on this range. For prices satisfying \(\mu+s/\xi-1<p\leq\mu-(1-s/\xi)/2\) and \(p<(1-2\xi)(\xi-s)/(2\xi^2)\), Seller chooses the price to solve
\[
\max_p ~~ p \left[1-2\xi\frac{\xi\left(p+1-\mu\right)-s}{\xi-s}\right].
\]
The objective is strictly concave in \(p\), so it suffices to look at the first-order condition (FOC). The FOC yields
\begin{equation} \label{eq:point}
p^i = \frac{(1-2\xi)(\xi-s)}{4 \xi^{2}}+\frac{\mu}{2};
\end{equation}
I claim, however, that \(p^i \ge \min \{\mu - (1-s/\xi)/2, \left(1-2\xi\right)\left(\xi-s\right)/(2\xi^2)\}\). Using \eqref{eq:point}, simple algebra reveals that the inequality is equivalent to either
	\begin{equation} \label{kfm}
		\mu \ge \frac{\left(1-2\xi\right)\left(\xi-s\right)}{2\xi^{2}}
	\end{equation}
	or
	\begin{equation} \label{lfm}
		\mu \le \frac{\xi-s}{2\xi^2}.
	\end{equation}
    If \eqref{kfm} fails, then its right-hand side must be positive because \(\mu\geq0\), which implies \(\xi<1/2\). Hence \(0<1-2\xi<1\), and 
    \[\mu<(1-2\xi)\frac{\xi-s}{2\xi^2}\leq\frac{\xi-s}{2\xi^2},\] 
    where the second inequality also uses \(s<\xi\). Thus, \eqref{lfm} holds whenever \eqref{kfm} fails. Since \(p^i\) is at least the upper endpoint of the price range in which \(H_u^h\) is robustly optimal and the objective is strictly concave, the corresponding revenue guarantee is strictly increasing in \(p\) throughout that range. The formulas for \(\Phi(p)\) also agree whenever two price ranges meet: at \(p=\mu+s/\xi-1\), both relevant formulas equal \(p\); at \(p=\mu-(1-s/\xi)/2\), both equal \(p(1-\xi)\); and at \(p=(1-2\xi)(\xi-s)/(2\xi^2)\), both equal \(\mu(1-2\xi)\). Together with the strict increase established within each price range, these equalities imply that \(\Phi(p)\) is strictly increasing on \([0,s/\xi]\). Therefore, the optimal price in this region is \(p^{**}=s/\xi\), and the robustly optimal distribution and revenue guarantee are determined by the case containing \(p^{**}\).

Finally, it is straightforward to see that 
\[
p^{**} = \frac{s}{\xi} \ge \frac{\left(1-2\xi\right)\left(\xi-s\right)}{2\xi^{2}}
\]
if and only if \(s \ge B_3(\xi)\). This completes the proof.
\end{proof}

\subsubsection{Final Step}
It now remains to compare the revenue guarantees from the two price regions characterized in \Cref{claim:high-price,claim:low-price}. According to \Cref{ppl,claim:high-price}, whenever \(\Pi_h\leq\mu s/\xi\), any selling strategy with \(p>s/\xi\) is dominated by providing full information and charging \(p^{**}\). Thus, for prices above \(s/\xi\), it suffices to consider the case \(\Pi_h>\mu s/\xi\), in which the highest revenue guarantee is \(\Pi_h\). Moreover, the only candidates from the region \(p\leq s/\xi\) that can be optimal in the unrestricted problem are \((p^{**},H_b^*)\) and \((p^{**},H_u^*)\). Indeed, the only additional possibility identified by \autoref{claim:low-price} is \((p^{**},H_{\bar w}(\cdot\mid p^{**}))\), which arises when \(s<B_3(\xi)\) and \(\mu<(1+s/\xi)/2\). In this case, \(s<B_3(\xi)\) implies \(\xi<1/2\) and \(p^{**}>2s\); the relevant expression for \(\Phi(p)\) in the proof of \autoref{claim:high-price} is therefore strictly increasing immediately to the right of \(p^{**}\). Hence, although \((p^{**},H_{\bar w}(\cdot\mid p^{**}))\) solves the problem restricted to \(p\leq s/\xi\), it cannot be optimal in the unrestricted problem.

For fixed \(\xi\in(0,1)\) and \(s\in(0,\xi)\), define \(D(s)\coloneqq\left\{\mu\in(0,1):\mu s/\xi\ge\Pi_h\right\}\), and
\[
N(s)\coloneqq\left\{\mu\in(0,1):\mu\ge\frac{1+s/\xi}{2}\text{ and }\frac{s}{\xi}-\frac{2s\xi(1-\mu)}{\xi-s}\ge\Pi_h\right\},
\]
where \(\Pi_h\) is defined in \eqref{php}. Thus, \(D(s)\) and \(N(s)\) are the sets of priors for which the revenue guarantees from \((p^{**},H_b^*)\) and \((p^{**},H_u^*)\), respectively, exceed \(\Pi_h\).

Direct algebra shows that each of \(D(s)\) and \(N(s)\) is either empty or consists of all priors above some cutoff. It also shows that both sets are empty if and only if
\begin{equation}\label{nds}
s\le\frac{\xi(\xi-1)^2}{(\xi+1)^2}=B_2(\xi).
\end{equation}
Thus, uniform information is optimal for every \(\mu\in(0,1)\) whenever \eqref{nds} holds. If instead \(s>B_2(\xi)\), both sets are nonempty, and I define
\[
\hat{\mu}(s)\coloneqq\inf D(s),\qquad \check{\mu}(s)\coloneqq\inf N(s).
\]
Because \(\mu s/\xi\) and the first expression in \eqref{php} are linear in \(\mu\), \(\hat{\mu}(s)=0\) if and only if
\begin{equation}\label{pvc}
\frac{s}{\xi}\ge 1-\sqrt{2(\xi-s)-(\xi-s)^2},
\end{equation}
or equivalently,
\begin{equation}\label{mds}
s\ge\frac{\xi(\xi-1)^2}{\xi^2+1}=B_1(\xi).
\end{equation}
If \eqref{mds} holds, full information is optimal for every \(\mu\in(0,1)\). 

If instead \(B_2(\xi)<s<B_1(\xi)\), then \(\hat{\mu}(s),\check{\mu}(s)\in(0,1)\). In this case, the difference between the revenue guarantees from mixture and full information is
\begin{equation}\label{eq:det-comparison}
\left[\frac{s}{\xi}-\frac{2s\xi(1-\mu)}{\xi-s}\right]-\frac{\mu s}{\xi}=\frac{s(1-\mu)\bigl(B_3(\xi)-s\bigr)}{\xi(\xi-s)}.
\end{equation}
For \(\mu>(1+s/\xi)/2\), \eqref{eq:det-comparison} shows that full information yields at least as much revenue as mixture information when \(s\geq B_3(\xi)\), while mixture information yields strictly more when \(s<B_3(\xi)\). Suppose first that \(s\geq B_3(\xi)\). By \autoref{claim:low-price}, full information yields the highest revenue guarantee over prices \(p\leq s/\xi\). If \(\mu<\hat{\mu}(s)\), its revenue guarantee is less than \(\Pi_h\), and uniform information at \(p^*\) is optimal. If \(\mu\geq\hat{\mu}(s)\), full information yields at least \(\Pi_h\) and is optimal. Therefore, if \(s\geq B_3(\xi)\), uniform information is optimal for \(\mu<\hat{\mu}(s)\), while full information is optimal for \(\mu\geq\hat{\mu}(s)\).

Now suppose that \(s<B_3(\xi)\). When \(\mu\leq(1+s/\xi)/2\), the strict increase of \(\Phi(p)\) immediately to the right of \(p^{**}\), together with the fact that full information at \(p^{**}\) guarantees \(\mu s/\xi\), implies that some price above \(s/\xi\) yields a revenue guarantee strictly greater than \(\mu s/\xi\). Hence, \(\Pi_h>\mu s/\xi\), and \autoref{claim:high-price} implies that uniform information at \(p^*\) is optimal. When \(\mu>(1+s/\xi)/2\), \autoref{claim:low-price} shows that \((p^{**},H_u^*)\) yields the highest revenue guarantee over prices \(p\leq s/\xi\). Seller therefore compares the revenue guarantee in \eqref{rtd} with \(\Pi_h\). By the definitions of \(N(s)\) and \(\check{\mu}(s)\), uniform information is optimal for \(\mu<\check{\mu}(s)\), while mixture information is optimal for \(\mu\geq\check{\mu}(s)\).

Whenever the comparisons above select uniform information, \(\mu\notin D(s)\), and hence \(\Pi_h>\mu s/\xi\). Therefore, \autoref{claim:high-price} implies that \(p^*>\max\{s/\xi,2s\}\). This completes the proof.

\subsection{Proof of \autoref{theorem:cs}} \label{icp}
\subsubsection{A Preliminary Result}
The following simple result is useful for the proof. For any \(p\in(0,1)\), define \(H_p=U_{[2\mu-1,1]}\) if \(p\le2\mu-1\), and otherwise define
\[
H_p(w)\coloneqq
\begin{cases}
1-\frac{2\mu}{1+p} & \text{if }w\in[0,p),\\
1-\frac{2\mu}{1-p^2}(1-w) & \text{if }w\in[p,1].
\end{cases}
\]
Whenever uniform information is optimal, the distribution specified in \Cref{ss:omitted-details} is \(H_{p^*}\).

\begin{claim}\label{c:how_p_affects_info}
If \(p_1\le p_2\), then \(H_{p_2}\) is a mean-preserving spread (MPS) of \(H_{p_1}\).\footnote{Let \(F_1\) and \(F_2\) be two distributions defined on \([0,1]\). \(F_1\) is a mean-preserving spread of \(F_2\) if \(\int_0^x F_2(t)\,\mathrm{d}t\le\int_0^x F_1(t)\,\mathrm{d}t\) for all \(x\in[0,1]\), with equality at \(x=1\).}
\end{claim}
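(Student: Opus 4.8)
The plan is to invoke \autoref{ppl} to put both $H_{p_1}$ and $H_{p_2}$ in closed form and then verify the defining inequality of a mean-preserving spread directly. Write $q:=\xi-s$. For every $p>s/\xi$, \autoref{ppl} says that the optimal distribution over posteriors has, across both of its cases, the common shape ``an atom of size $\alpha(p):=1-\tfrac{2\mu}{\ell(p)+r(p)}$ at $w=0$ together with a uniform density on $[\ell(p),r(p)]$,'' where $\ell(p):=\max\{p,2\mu-1\}$ and $r(p)\in[2\mu-\ell(p),1]$ is the value of $\bar w$ pinned down by the saddle-point analysis of \autoref{oip}; when $\mu>(1+p)/2$ this degenerates to $U_{[2\mu-1,1]}$ (i.e.\ $\alpha\equiv 0$, $\ell\equiv 2\mu-1$, $r\equiv 1$), which does not move with $p$, so the comparison is trivial there. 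Since each $H_{p_i}\in\mathcal M(\mu)$ we have $\int_0^1 H_{p_i}=1-\mu$, so the ``equality at $x=1$'' requirement in the MPS definition is automatic; it remains to show $\int_0^x H_{p_1}\le\int_0^x H_{p_2}$ for all $x\in[0,1]$.

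The first real step is to establish the comparative statics of $\bigl(\alpha(p),\ell(p),r(p)\bigr)$ in $p$: $\ell(p)$ is visibly non-decreasing, and I would show that the atom $\alpha(p)$ is non-decreasing --- equivalently that $\ell(p)+r(p)$ is non-decreasing --- by taking the explicit branches of $r(\cdot)$ that come out of \autoref{oip} (on one branch $r(p)=2\mu-\ell(p)$; on another $r(p)=\bigl(\ell(p)+q\bigr)+\sqrt{q\bigl(q+2\ell(p)\bigr)}$; on a third $r(p)=1$) and checking monotonicity branch-by-branch, together with continuity at the cutoffs. The second step converts this into control of the sign of $H_{p_2}-H_{p_1}$ along $[0,1]$: on $[0,\ell(p_1))$ it equals $\alpha(p_2)-\alpha(p_1)\ge 0$; on $[r(p_1),1]$ it is $\le 0$ because $H_{p_1}\equiv 1$ there; and in between it is piecewise affine, so the monotonicity of $(\alpha,\ell,r)$ should force exactly one sign change, from nonnegative to nonpositive. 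A piecewise-affine function that crosses zero once, from $+$ to $-$, and integrates to $0$ over $[0,1]$ has $\int_0^x(H_{p_2}-H_{p_1})\ge 0$ for every $x$ (it rises up to the crossing point and falls afterwards, starting and ending at $0$) --- which is precisely the assertion that $H_{p_2}$ is an MPS of $H_{p_1}$. Both steps are bookkeeping with the piecewise-linear CDFs in \eqref{dhm}.

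The step I expect to be the genuine obstacle is the branch of \autoref{ppl} on which the optimal law is $U_{[\ell(p),\,2\mu-\ell(p)]}$: there the support interval \emph{contracts} symmetrically toward $\mu$ as $p$ grows, the atom stays at $0$, and $H_{p_2}-H_{p_1}$ changes sign in the \emph{opposite} order to what is needed. So one must argue that this configuration does not actually bite --- either that it cannot hold simultaneously for two distinct prices both above $s/\xi$ under the maintained assumptions, or, as is all that \autoref{icp} requires, by restricting to the prices that support the robustly optimal information policy (namely $2\mu-1$ and $p^\ast$, which sit in the $r\equiv1$ branch where $\alpha(p)=1-\tfrac{2\mu}{1+p}$ is strictly increasing), for which the $+$-to-$-$ single-crossing pattern holds cleanly. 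Nailing down which prices are in scope, and disposing of the contracting-interval branch, is where the care is needed.
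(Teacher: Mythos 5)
Your approach --- read the closed forms off \autoref{ppl}, verify that \(H_{p_1}\) single-crosses \(H_{p_2}\) from below, and conclude via the equal-means/single-crossing criterion --- is exactly the paper's route (the paper invokes Theorem 3.A.44 of Shaked and Shanthikumar for the last step; you re-derive the implication directly, but it is the same fact). The ``genuine obstacle'' you flag is real, and your caution here is the most valuable part of the proposal. On the branch of \autoref{ppl} where \(\bar w = 2\mu - p\) (equivalently \(r(p) = 2\mu - \ell(p)\)), the optimal law is \(U_{[p,\,2\mu-p]}\) with no atom at zero; as \(p\) grows the support \emph{contracts} symmetrically around \(\mu\), so \(H_{p_1}\) is an MPS of \(H_{p_2}\) rather than the reverse, and \(H_{p_2}-H_{p_1}\) changes sign from negative to positive --- the wrong order. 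Your first proposed escape (that two such prices cannot both exceed \(s/\xi\)) does not hold in general: take \(\xi = 0.1\), \(s = 0.01\), \(\mu = 1/2\); then \(s/\xi = 0.1\) while \(\mu - t \approx 0.258\) and \(2\mu-1 = 0\), so the contracting branch covers \((0.1,\,0.258)\) and any two prices there break the claim as literally stated. Your second escape is the correct one and is in fact how the claim is used: in the proof of Part (ii) of \autoref{ics}, \autoref{c:how_p_affects_info} is invoked only at robust prices, and the proof of \autoref{spp} shows the robust price \(p^*\) never lies in the contracting branch (or the adjacent interior branch), only in the \(p < 2\mu - 1\) region where \(H_p\) is constant in \(p\) or the \(\bar w = 1\) region where the atom \(1 - 2\mu/(1+p)\) is strictly increasing and the crossing runs from \(+\) to \(-\) as you describe. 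So the claim as written is slightly too strong; the paper's one-line assertion that ``\(H_{p_1}\) must cross \(H_{p_2}\) only once, and from below'' is not verified and fails on the contracting branch. Your write-up should simply state the restriction explicitly (to prices outside the contracting and interior-\(\bar w\) branches, or equivalently to the prices \autoref{spp} identifies as candidates for \(p^*\)), after which the monotone-atom, single-crossing argument you sketch closes the proof cleanly.
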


\begin{proof}[Proof of \autoref{c:how_p_affects_info}]
If \(p_1=p_2\) or \(p_2\le2\mu-1\), the claim is immediate. Suppose instead that \(p_1<p_2\) and \(p_2>2\mu-1\). Inspection of the definitions shows that \(H_{p_1}\) crosses \(H_{p_2}\) only once, and from below. Since both distributions have mean \(\mu\), Theorem 3.A.44 in \cite{ss07} implies that \(H_{p_2}\) is a MPS of \(H_{p_1}\).
\end{proof}

\subsubsection{Proof of Part (i)}
On each branch of \(\Pi_h\), implicit differentiation of the equality defining the relevant cutoff shows that \(\check{\mu}(s)\) is strictly decreasing on \((B_2(\xi),B_3(\xi))\), whenever this interval is nonempty, and that \(\hat{\mu}(s)\) is strictly decreasing on \((\max\{B_2(\xi),B_3(\xi)\},B_1(\xi))\). The cutoff is continuous when the relevant branch of \(\Pi_h\) changes. The relevant cutoff converges to \(1\) as \(s\searrow B_2(\xi)\), while \eqref{eq:det-comparison} implies that \(\check{\mu}(B_3(\xi))=\hat{\mu}(B_3(\xi))\) whenever \(B_2(\xi)<B_3(\xi)\). Together with the fact that full information is optimal for every \(\mu\in(0,1)\) when \(s\ge B_1(\xi)\), these observations imply that, for fixed \(\mu\) and \(\xi\), there exists a cutoff \(\hat{s}\in(B_2(\xi),B_1(\xi)]\) such that uniform information is used for \(s<\hat{s}\), while a deterrence policy is used for \(s\ge\hat{s}\).

Therefore,
\[
p_r(s)=
\begin{cases}
p^*(s) & \text{if }s\in(0,\hat{s}),\\
p^{**}(s) & \text{if }s\in[\hat{s},\xi).
\end{cases}
\]
It follows directly from \eqref{eq:no-det-price} that \(p^*(s)\) is increasing in \(s\), while \(p^{**}(s)=s/\xi\) is strictly increasing. Thus, \(p_r\) is increasing on \((0,\hat{s})\) and \((\hat{s},\xi)\).

It remains to establish the downward jump at \(\hat{s}\). Since \(\hat{s}\le B_1(\xi)\), the equivalence between \eqref{pvc} and \eqref{mds} implies that
\[
\frac{\hat{s}}{\xi}\le 1-\sqrt{2(\xi-\hat{s})-(\xi-\hat{s})^2}.
\]
Moreover, every branch of \(p^*\) in \eqref{eq:no-det-price} is bounded below by its first branch. Therefore,
\[
p^*(\hat{s})\ge\frac{1-\sqrt{2(\xi-\hat{s})-(\xi-\hat{s})^2}}{1-\xi+\hat{s}}>1-\sqrt{2(\xi-\hat{s})-(\xi-\hat{s})^2}\ge\frac{\hat{s}}{\xi}=p^{**}(\hat{s}),
\]
where the strict inequality follows from \(0<1-\xi+\hat{s}<1\). Since \(p^*\) and \(p^{**}\) are continuous, \(p_r(\hat{s}-)=p^*(\hat{s})>p^{**}(\hat{s})=p_r(\hat{s}+)\).

\subsubsection{Proof of Part (ii)}
Fix \(\mu\) and \(\xi\), and suppose \(s_1<s_2\). Denote the corresponding distributions over posteriors by \(H_{s_1}\) and \(H_{s_2}\), respectively. By Theorem 7 in \cite{b53d}, the information provision corresponding to \(s_2\) is Blackwell more informative than that corresponding to \(s_1\) if and only if \(H_{s_2}\) is a mean-preserving spread of \(H_{s_1}\).

By the cutoff argument in Part (i), if uniform information is optimal at \(s_2\), it is also optimal at \(s_1\). In this case, Part (i) implies that \(p^*(s_1)\le p^*(s_2)\). Since the search cost affects the uniform information specified in \Cref{ss:omitted-details} only through the robust price, \autoref{c:how_p_affects_info} implies that \(H_{s_2}\) is a mean-preserving spread of \(H_{s_1}\).

If \(H_{s_2}\) is full information, it is Blackwell more informative than \(H_{s_1}\). It therefore remains to consider the case in which \(H_{s_2}\) is mixture information. This requires \(\xi<\sqrt{2}-1\), \(B_2(\xi)<s_2<B_3(\xi)\), and \(\mu\ge\check{\mu}(s_2)\).

Suppose first that \(H_{s_1}\) is uniform information. Let \(\mu_0\coloneqq1/(1+\xi)\). Because \(B_2(\xi)<s_2<B_3(\xi)\), direct algebra gives
\[
\frac{1+s_2/\xi}{2}<\mu_0
\qquad\text{and}\qquad
\ubar{\mu}(s_2)<\mu_0<\bar{\mu}(s_2).
\]
Thus, mixture information is feasible at \(\mu_0\), and \(\Pi_h\) is given by the middle branch of \eqref{php}. The difference between the revenue guarantee from mixture information, given by \eqref{rtd}, and \(\Pi_h\) is
\[
\left[\frac{s_2}{\xi}-\frac{2s_2\xi(1-\mu_0)}{\xi-s_2}\right]-(2\mu_0-1)\left(1-\frac{\xi-s_2}{2-2\mu_0}\right)=-\frac{\left[s_2(1+\xi)-\xi(1-\xi)\right]^2}{2\xi(\xi-s_2)(1+\xi)}<0.
\]
Since \(\check{\mu}(s_2)\) is the cutoff above which mixture information becomes optimal, it follows that \(\check{\mu}(s_2)>\mu_0\). Moreover, the expression for \(\ubar{\mu}\) implies that \(s_1<s_2<B_3(\xi)\) yields \(\ubar{\mu}(s_1)<\mu_0\). Hence, \(\mu\ge\check{\mu}(s_2)\) implies that \(\mu>\ubar{\mu}(s_1)\). By \eqref{eq:no-det-price} and the definition in \Cref{ss:omitted-details}, the uniform information corresponding to \(s_1\) is therefore \(U_{[2\mu-1,1]}\).

Because \(s_2\) corresponds to mixture information, \(2\mu-1>s_2/\xi\). It follows directly from \eqref{huh} that \(U_{[2\mu-1,1]}\) crosses \(H_{s_2}\) only once, and from below. Since both distributions have mean \(\mu\), Theorem 3.A.44 in \cite{ss07} implies that \(H_{s_2}\) is a mean-preserving spread of \(U_{[2\mu-1,1]}\). Thus, \(H_{s_2}\) is Blackwell more informative than \(H_{s_1}\).

Finally, suppose that \(H_{s_1}\) is also mixture information. Because \(s_1<s_2\), the lower endpoint \(s_1/\xi\) of its affine segment is below the corresponding endpoint \(s_2/\xi\) under \(H_{s_2}\). It follows directly from \eqref{huh} that \(H_{s_2}\) crosses \(H_{s_1}\) only once, and from below. Since both distributions have mean \(\mu\), Theorem 3.A.44 in \cite{ss07} implies that \(H_{s_1}\) is a mean-preserving spread of \(H_{s_2}\). Hence, the Blackwell ranking is reversed.

Therefore, for any \(s_1<s_2\), the information provision corresponding to \(s_2\) is Blackwell more informative than that corresponding to \(s_1\), except when \(B_2(\xi)<s_1<s_2<B_3(\xi)\) and \(\mu\ge\check{\mu}(s_1)\).
In this exceptional case, both search costs lead to mixture information, and the information provision corresponding to \(s_1\) is Blackwell more informative than that corresponding to \(s_2\).

\subsubsection{Proof of Part (iii)}
\autoref{srg}, which is a corollary of \autoref{theorem:optimal}, \autoref{claim:high-price}, and \autoref{claim:low-price}, summarizes Seller's revenue guarantee for different parameter values.

\begin{claim} \label{srg}
If \(s\ge B_1(\xi)\), Seller's revenue guarantee is \(\Pi=\mu s/\xi\); and if \(s\le B_2(\xi)\), Seller's revenue guarantee is given by \eqref{php}. If \(B_2(\xi)<s<B_1(\xi)\), there are two cases:
\begin{itemize}
    \item[(1)] If either \(\xi\ge\sqrt{2}-1\), or \(\xi<\sqrt{2}-1\) and \(B_3(\xi)\le s<B_1(\xi)\), there exists \(\hat{\mu}\in(0,1)\) such that for \(\mu<\hat{\mu}\), Seller's revenue guarantee is given by \eqref{php}; and for \(\mu\ge\hat{\mu}\), Seller's revenue guarantee is \(\Pi=\mu s/\xi\).
    \item[(2)] If instead \(\xi<\sqrt{2}-1\) and \(B_2(\xi)<s<B_3(\xi)\), there exists \(\check{\mu}\in(0,1)\) such that for \(\mu<\check{\mu}\), Seller's revenue guarantee is given by \eqref{php}; and for \(\mu\ge\check{\mu}\), Seller's revenue guarantee is given by \eqref{rtd}.
\end{itemize}
\end{claim}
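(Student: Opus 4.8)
The plan is to obtain \autoref{srg} as a direct bookkeeping consequence of \autoref{qaq}, substituting in the closed-form revenue guarantees already computed in \autoref{spp} and \autoref{clp}. Since the revenue guarantee is by definition the value \(\min_{G \in \mathcal{M}(\xi)}\Pi(p,H \mid G)\) attained by any robustly optimal selling strategy, it suffices to evaluate this value at each of the three strategies that \autoref{qaq} can single out---\((p^*,\text{uniform information})\), \((p^{**},\text{full information})\), and \((p^{**},\text{mixture information})\)---and then line up the parameter regions.

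The first step is to record the revenue guarantee of each candidate. For \((p^*,\text{uniform information})\), \autoref{spp} states directly that it equals \(\Pi_h\) as given by \eqref{php}. For \((p^{**},\text{full information})\) with \(p^{**}=s/\xi\): under full information the posterior \(w=1\) realizes with probability \(\mu\), and since \(1-p^{**}=1-s/\xi\ge a\) for every reservation value \(a\le\max a=1-s/\xi\), Buyer purchases without search in that event no matter which \(G\) Nature picks, while the posterior \(w=0<p^{**}\) never leads to a purchase; hence demand equals \(\mu\) for all \(G\in\mathcal{M}(\xi)\) and the revenue guarantee is \(\mu s/\xi\), consistent with the binary-distribution value in \autoref{clp}. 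For \((p^{**},\text{mixture information})\), \autoref{clp} gives the revenue guarantee \eqref{rtd}; the alternative expression \eqref{fke} does not enter, because the discussion following the proof of \autoref{clp} already rules out the pair \((p^{**},H_{\bar{w}}(\cdot\mid p^{**}))\).

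The second step is the region translation. The two blanket cases \(s\ge B_1(\xi)\) and \(s<B_2(\xi)\) follow immediately from the first two sentences of \autoref{qaq} together with the first step. For the band \(B_2(\xi)\le s<B_1(\xi)\) the only extra fact needed is the one-line algebraic comparison \(B_3(\xi)\le B_2(\xi)\iff\xi\ge\sqrt{2}-1\) (compare \(\xi-2\xi^2\) with \(\xi(\xi-1)^2/(\xi+1)^2\)). Hence: if \(\xi\ge\sqrt{2}-1\) then \(s\ge B_2(\xi)\ge B_3(\xi)\) on the whole band, so case (1) of \autoref{qaq} applies throughout, which is exactly case (1) of \autoref{srg}; if \(\xi<\sqrt{2}-1\) then the sub-band \(B_3(\xi)\le s<B_1(\xi)\) is governed by \autoref{qaq}(1), giving \autoref{srg}(1), while the sub-band \(B_2(\xi)\le s<B_3(\xi)\) is governed by \autoref{qaq}(2), giving \autoref{srg}(2). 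In each sub-case the cutoff \(\hat{\mu}\) (resp.\ \(\check{\mu}\)) and its membership in \((0,1)\) are inherited verbatim from \autoref{qaq}---equivalently, from the non-emptiness and the linear-versus-concave structure of the sets \(D\) and \(N\) used in its proof---and the revenue guarantee on either side of the cutoff is supplied by the first step.

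Since every ingredient is already in hand, I do not expect a genuine obstacle; the work is purely organizational. The only points requiring care are (i) the sign of \(B_3(\xi)-B_2(\xi)\), a single quadratic comparison, and (ii) the knife-edge \(s=B_3(\xi)\) when \(\xi<\sqrt{2}-1\), which \autoref{clp} handles directly (the binary distribution is optimal there, with revenue guarantee \(\mu s/\xi\), agreeing with \autoref{srg}(1)), and which can alternatively be dispatched by continuity of the revenue guarantee in \(s\). Assembling these observations yields \autoref{srg}.
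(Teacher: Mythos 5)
Your proposal is correct and takes exactly the approach the paper intends: the paper presents \autoref{srg} without a separate proof, stating only that it ``is a corollary of \autoref{qaq}, \autoref{spp}, and \autoref{clp},'' and your write-up simply supplies the bookkeeping that this remark leaves implicit. In particular, the key algebraic fact you isolate---\(B_3(\xi)\le B_2(\xi)\iff\xi\ge\sqrt{2}-1\), which reduces to \((\xi+1)^2\ge 2\)---is precisely what converts the \(B_2\)--\(B_3\)--\(B_1\) ordering in \autoref{qaq} into the \(\xi\gtrless\sqrt{2}-1\) dichotomy in \autoref{srg}, and your handling of the knife-edge \(s=B_3(\xi)\) via \autoref{clp} (or continuity) closes the only boundary the region translation could leave ambiguous.
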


Part (iii) follows by differentiating the revenue expressions in \autoref{srg} within each parameter region and noting that adjacent expressions coincide at the cutoffs.

\subsection{Proof of \autoref{cor:buyer-welfare}}\label{proof:buyer-welfare}

Fix \(G\in\mathcal{M}(\xi)\). For any selling strategy \((p,H)\) and search cost \(s\), let
\[
U_G(p,H;s)\coloneqq\int_0^1 \max\left\{0,w-p,\mathbb{E}_G[\max\{w-p,v\}]-s\right\}\,\mathrm{d}H(w)
\]
denote Buyer's ex ante expected payoff. Let \((p^-,H^-)\) and \((p^+,H^+)\) denote the limits of the robustly optimal selling strategies as \(s\nearrow\hat{s}\) and \(s\searrow\hat{s}\), respectively. Parts (i) and (ii) of \autoref{theorem:cs}, together with continuity of the policy formulas on either side of \(\hat{s}\), imply that \(p^->p^+\) and that \(H^+\) is a mean-preserving spread of \(H^-\).

For fixed \(p\), \(s\), and \(G\), Buyer's optimal expected payoff conditional on posterior \(w\),
\[
\max\left\{0,w-p,\mathbb{E}_G[\max\{w-p,v\}]-s\right\},
\]
is convex in \(w\). Hence, \(U_G(p^+,H^+;\hat{s})\ge U_G(p^+,H^-;\hat{s})\).
Moreover, Seller's revenue under \((p^-,H^-)\) is positive for every \(G\in\mathcal{M}(\xi)\), since it is bounded below by her positive revenue guarantee. Buyer therefore purchases Seller's product with positive probability. If the price is reduced from \(p^-\) to \(p^+\) and Buyer follows the same strategy, her payoff rises by \(p^--p^+\) whenever she purchases Seller's product and is unchanged otherwise. Thus,
\(U_G(p^+,H^-;\hat{s})>U_G(p^-,H^-;\hat{s})\).
Combining the two inequalities gives \(U_G(p^+,H^+;\hat{s})>U_G(p^-,H^-;\hat{s})\).
Buyer's expected payoff varies continuously with \(s\) along each of the two strategy branches. Therefore, \(W_G(\hat{s}+)>W_G(\hat{s}-)\), and the same comparison holds for \(s_1<\hat{s}<s_2\) sufficiently close to \(\hat{s}\).

\section{Two Benchmarks: Formal Statements}
\subsection{Zero Search Cost} \label{zsc}
When \(s = 0\), Buyer always inspects the outside option since it is costless to do so. \autoref{zsb} describes the robustly optimal selling strategy in this case.

\begin{proposition} \label{zsb}
    Suppose \(s = 0\). Uniform information is optimal, and the robust price is \(p^*_0\). 
\end{proposition}

\begin{proof}
    When \(s = 0\), the definition of \(a\) indicates that \(a = 1\). Consequently, Seller's payoff when the distribution over outside options is \(G\) can be simplified to \(p \,\mathbb{E}_G[1-H(p+v)]\). Therefore, I can work with the outside-option distribution directly; in particular, there is no need to find an effective outside-option distribution first and then find an outside-option distribution that generates it.

    For each pair consisting of a distribution over posteriors and an effective outside-option distribution \(\left(H, \hat{G}\right)\) that is a saddle point in the proof of Proposition 1, it can be checked that by setting \(s=0\), the resulting pair \((H', G')\) of the distribution over posteriors and outside-option distribution is a saddle point for the zero search cost problem. Using these results, an argument similar to the proof of Claim 1 establishes the proposition.
\end{proof}

\subsection{Known Outside-Option Distribution} \label{boo}
The only difference between this benchmark and the main model is that Seller \emph{knows} Buyer's outside-option distribution \(G\). For simplicity, assume that \(G\) has full support and admits a strictly log-concave density \(g\).

\begin{proposition} \label{proposition:bayesian}
    An optimal selling strategy fully reveals the match value and charges
    \[
    p^o = 
    \begin{cases}
     1- a & \text{ if  } 1-a \ge p_hG\left(1-p_h\right), \\
     p_h & \text{ if  } 1-a < p_hG\left(1-p_h\right),
    \end{cases}
    \]
    where \(a\) is defined in \eqref{eq:res-value}, and \(p_h\) is the unique maximizer of \(pG(1-p)\), characterized by\footnote{Uniqueness follows because \(G\) has full support and \(g\) is strictly log-concave.}
\begin{equation} \label{dfp}
    p = \frac{G(1-p)}{g(1-p)}.
\end{equation}
\end{proposition}

The optimality of full information follows from the observation that, absent full information, Seller can always increase her profits by either raising the price or providing more information, or both. To understand the pricing formula in \autoref{proposition:bayesian}, suppose first that Seller charges \(1-a\) under full information. When the match value is high, Buyer receives posterior \(1\) and purchases without search; when the match value is low, Buyer receives posterior \(0\) and does not buy from Seller. Seller's revenue is therefore \(\mu(1-a)\). Alternatively, Seller can charge a price \(p\) at which Buyer searches after posterior \(1\). In that case, Buyer returns only when the outside option realization is below \(1-p\), so Seller's revenue is \(\mu pG(1-p)\). The best such price is \(p_h\). Hence Seller compares \(\mu(1-a)\) with \(\mu p_hG(1-p_h)\): when the former is larger, Seller charges \(1-a\), so Buyer purchases without search after the high match-value realization; otherwise, Seller charges \(p_h\) and accommodates Buyer's search. See \Cref{proof:bayesian} for the formal proof. 

The search-accommodating price is selected only if \(1-a<p_hG(1-p_h)\), which implies \(p_h>1-a\). Thus, the model features a familiar trade-off: deterring search requires a lower price, while accommodating search permits a higher price but exposes demand to the outside option. For a fixed \(G\), this trade-off is governed entirely by the search cost.

\begin{corollary} \label{scf}
    For every outside-option distribution \(G\), there exists \(\hat{s}_G \in (0, \xi)\) such that \(p^o = p_h\) for every \(s < \hat{s}_G\), and \(p^o = 1-a\) for every \(s \ge \hat{s}_G\). Furthermore, at \(s = \hat{s}_G\), the optimal price drops from \(p_h\) to \(1 - a\left(\hat{s}_G\right)\). 
\end{corollary}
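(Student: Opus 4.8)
The plan is to read the corollary directly off Proposition~\ref{bbo} by tracking how the two quantities in its case split, \(1-a\) and \(p_h G(1-p_h)\), depend on \(s\). The crucial structural observation is that \(p_h\), being the unique root of \eqref{dfp}, depends only on \(G\) and \emph{not} on \(s\); hence \(c := p_h G(1-p_h)\) is a constant in \(s\). Since \(G\) has full support, the intermediate-value argument in the footnote to Proposition~\ref{bbo} gives \(p_h \in (0,1)\), and strict monotonicity of \(G\) gives \(G(1-p_h)\in(0,1)\), so \(c \in (0,1)\) (indeed \(c\in(0,p_h)\)).

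Next I would pin down the monotonicity of \(a\) in \(s\). Recall \(a=a(s)\) solves \(S_G(a)=s\) with \(S_G(t)=\int_t^1 (v-t)\,\mathrm{d}G(v)\) (see \eqref{dta}). Full support of \(G\) makes \(S_G\) continuous and strictly decreasing on \([0,1]\), with \(S_G(0)=\xi\) and \(S_G(1)=0\); thus \(S_G\) is a decreasing bijection of \([0,1]\) onto \([0,\xi]\), and \(a(s)=S_G^{-1}(s)\) is continuous and strictly decreasing, so \(1-a(s)\) is continuous and strictly increasing, rising from \(0\) at \(s=0\) to \(1\) at \(s=\xi\). Now set \(\hat{s}_G := S_G(1-c)\). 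Since \(1-c\in(0,1)\), strict monotonicity of \(S_G\) gives \(0 = S_G(1) < \hat{s}_G < S_G(0) = \xi\), i.e.\ \(\hat{s}_G\in(0,\xi)\); and \(1-a(s)\ge c \iff a(s)\le 1-c \iff s\ge\hat{s}_G\).

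Feeding this equivalence into the case split of Proposition~\ref{bbo} yields the first assertion: for \(s<\hat{s}_G\) we have \(1-a(s)<c=p_h G(1-p_h)\), so \(p^o=p_h\); for \(s\ge\hat{s}_G\) we have \(1-a(s)\ge c\), so \(p^o=1-a(s)\). For the jump, at \(s=\hat{s}_G\) the optimal price is \(p^o(\hat{s}_G)=1-a(\hat{s}_G)=c\), whereas \(\lim_{s\uparrow\hat{s}_G} p^o(s)=p_h\); the discontinuity is strictly downward because \(p_h-c=p_h\bigl(1-G(1-p_h)\bigr)>0\), using once more that \(G(1-p_h)<1\) by full support. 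Hence at \(s=\hat{s}_G\) the optimal price drops from \(p_h\) to \(1-a(\hat{s}_G)\), as claimed (and \(p^o\) is right-continuous there).

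I do not anticipate a real obstacle: the argument is essentially bookkeeping. The two points requiring care are (i) verifying that \(S_G\) is a \emph{strict} bijection, so that \(a(\cdot)\) inherits strict monotonicity in \(s\) and \(\hat s_G\) sits strictly inside \((0,\xi)\) — this is exactly where the full-support assumption on \(G\) is used — and (ii) the boundary bookkeeping at \(s=\hat s_G\), namely the tie-break \(1-a=p_hG(1-p_h)\) in Proposition~\ref{bbo} and the sign of \(p_h-c\) that makes the jump downward.
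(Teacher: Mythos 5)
Your proof is correct and, modulo filling in details, is exactly the argument the paper has in mind: the corollary is stated as an immediate consequence of Proposition~\ref{bbo}, and the text just before it already observes that the comparison reduces to $\max\{\mu(1-a),\,\mu\,p_h G(1-p_h)\}$ with $p_h$ independent of $s$. Your contribution is the clean bookkeeping — noting $a(\cdot)=S_G^{-1}(\cdot)$ is a continuous strict bijection from $[0,\xi]$ onto $[0,1]$ (so $1-a(s)$ strictly increases from $0$ to $1$), setting $\hat s_G:=S_G(1-p_h G(1-p_h))\in(0,\xi)$, and using $G(1-p_h)<1$ to get the strict downward jump $p_h>1-a(\hat s_G)$ — all of which is consistent with Proposition~\ref{bbo}'s tie-break at equality.
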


\begin{proof}
    Because the outside-option distribution \(G\) has full support, \(a\) is strictly decreasing in \(s\). Because \(p_h\) does not depend on \(s\), there exists a unique \(a^*\) solving \(1-a = p_h G\left(1-p_h\right)\). Let \(\hat{s}_G\) denote the search cost corresponding to \(a^*\). The statement then follows from \autoref{proposition:bayesian}. 
\end{proof}

\subsubsection{Proof of \autoref{proposition:bayesian}} \label{proof:bayesian}
For a fixed posterior \(w\), the probability that Buyer buys from Seller is given by \(G_p^a(w)\), where, when \(p+a\leq1\),
\[
G_p^a(w)\coloneqq\left\{\begin{array}{ll} 0 & \text{ if }  w < p, \\ G(w-p) & \text{ if }  p \le w < p+a, \\ 1 & \text{ if } w \ge p+a; \end{array}\right.
\]
and
\[
G_p^a(w)\coloneqq\left\{\begin{array}{ll}0 &   \text{ if }  w < p, \\ G(w-p) & \text{ if } w \ge p, \end{array} \right.
\]
when \(p+a > 1\). Therefore, if Seller chooses a distribution over posteriors \(H\), her payoff can be written as \(p\int_0^1G_p^a(w)\,\mathrm dH(w)\). Consequently, Seller's problem is
\[
\max_{p \in [0,1]} \left\{\max_{H \in \mathcal{M}(\mu)} p \int_0^1 G^a_p(w) \, \mathrm{d}H(w)\right\}.
\]

It is useful to solve for the optimal information provision policy for a fixed price first. For a fixed \(p \in [0,1]\), Seller's problem of choosing a distribution over posteriors is
\[
\max_{H \in \mathcal{M}(\mu)} \int_0^1 G^a_p(w) \, \mathrm{d}H(w),
\]
which is identical to the information design problem studied in \cite{kg11},\footnote{See the problem on page 2596 in \cite{kg11}.} where Seller, who plays the role of Sender in their framework, has value function \(G^a_p(w)\). Consequently, Seller's optimal distribution can be identified by finding the concave hull of \(G^a_p(w)\).

To identify the concave hull of \(G_p^a\), for \(p>0\), let \(r(p)\in(p,1]\) solve
\begin{equation} \label{dfr}
g(r(p) - p) r(p) = G(r(p) - p).
\end{equation}
If no such solution exists, set \(r(p)=1\). For \(p=0\), set \(r(0)=0\) if \(G\) is concave; otherwise, let \(r(0)\) be a positive solution to \(g(r(0))r(0)=G(r(0))\), if one exists, and set \(r(0)=1\) otherwise. Note that when \(r(p)<1\), it must be that \(g'(r(p)-p)<0\). If \(p>1-a\), \(G_p^a\) is concave on \([r(p),1]\), and its concave hull is affine on \([0,r(p)]\) and equals \(G_p^a\) on \([r(p),1]\).

Now suppose \(p\leq1-a\). Let \(\ell_p\) denote the line segment joining \((0,0)\) and \((p+a,1)\). If \(g(r(p)-p)\leq1/(p+a)\), the concave hull of \(G_p^a\) equals \(\ell_p\) on \([0,p+a]\) and equals \(1\) on \([p+a,1]\). If instead \(g(r(p)-p)>1/(p+a)\), let \(t(p)\) be the point at which the line through \((p+a,1)\) is tangent to \(G_p^a\). It solves
\begin{equation}\label{dft}
g(t(p)-p)(p+a-t(p))=1-G(t(p)-p).
\end{equation}
The concave hull of \(G_p^a\) is then affine on \([0,r(p)]\) and \([t(p),p+a]\), equals \(G_p^a\) on \([r(p),t(p)]\), and equals \(1\) on \([p+a,1]\).

Once the concave hull of \(G^a_p\) is identified, the optimal information provision for a fixed price follows immediately.

\begin{lemma}\label{ifp}
    Suppose that \(p\leq1-a\). If \(g(r(p)-p)\leq1/(p+a)\), then \(\{0,p+a\}\) is optimal when \(\mu\in(0,p+a)\),\footnote{Since every optimal distribution over posteriors is either degenerate or binary, I identify such a distribution by its support.} while \(\{\mu\}\) is optimal when \(\mu\in[p+a,1]\). If \(g(r(p)-p)>1/(p+a)\), 
    \begin{itemize}
    \item when \(\mu \in(0, r(p))\), \(\{0, r(p)\}\) is optimal;
    \item when \(\mu \in[r(p), t(p)]\), \(\{\mu\}\) is optimal;
    \item when \(\mu \in(t(p), p+a)\), \(\{t(p), p+a\}\) is optimal; and
    \item when \(\mu \in [p+a,1)\), \(\{\mu\}\) is optimal.
    \end{itemize}
    
    Suppose instead that \(p > 1-a\). Then when \(\mu \in(0, r(p))\), \(\{0, r(p)\}\) is optimal;
    and when \(\mu \in[r(p), 1)\), \(\{\mu\}\) is optimal.
\end{lemma}

\begin{proof}[Proof of \autoref{proposition:bayesian}]
    Suppose first that \(p > 1-a\). Let
\[
\hat p\coloneqq\sup\left\{p\in[0,1]:\eqref{dfr}\text{ has a solution in }(p,1]\right\}.
\]
By the implicit function theorem,
\begin{equation}
r'(p) = 1-\frac{g(r(p)-p)}{r(p)g^{\prime}(r(p)-p)}. \label{drr}
\end{equation}
By \autoref{ifp}, the optimal information provision policy depends on the location of \(\mu\): if \(\mu \in(0, r(p))\), \(\{0, r(p)\}\) is optimal; and if \(\mu \in[r(p), 1)\), \(\{\mu\}\) is optimal. Suppose \(\mu \in (0,r(p))\). Seller's payoff from setting \(p \in (1-a,\hat{p})\) is
\[
p \frac{\mu}{r(p)} G(r(p)-p) = p \mu g(r(p)-p),
\]
where the equality follows from \eqref{dfr}. Using \eqref{drr}, the derivative of this payoff is
\[
\mu g(r(p)-p)+p\mu g'(r(p)-p)\bigl(r'(p)-1\bigr)=\mu g(r(p)-p)\left(1-\frac{p}{r(p)}\right)>0,
\]
where the inequality follows from \(r(p)>p\). Thus, Seller's payoff is strictly increasing in \(p\) on \((1-a,\hat{p})\). For \(p \in [\hat{p},1)\), Seller's payoff is \(p \mu G(1-p)\). Hence, among the binary distributions identified above, the optimal one is full information, \(\{0,1\}\), and its optimal price is \(p_h\), the unique maximizer of \(pG(1-p)\).

Now suppose Seller uses no information \(\{\mu\}\); her payoff is given by \(pG(\mu - p)\), which is maximized at 
\begin{equation} \label{opm}
    p_\mu = \frac{G(\mu - p_\mu)}{g(\mu-p_\mu)}.
\end{equation}
Note that by definition of \(r(p)\), \(r(p) > p\) unless possibly at \(p=0\), which is never optimal. Then
\[
\frac{G(r(p_\mu) - p_\mu)}{g(r(p_\mu) - p_\mu)} = r(p_{\mu}) > p_\mu = \frac{G(\mu - p_\mu)}{g(\mu-p_\mu)},
\]
where the first equality follows from \eqref{dfr}, and the second equality holds by \eqref{opm}. By strict log-concavity of \(g\), it must be that \(\mu < r(p_\mu)\). Then the optimal distribution at \(p_\mu\) is \(\{0,r(p_\mu)\}\), which rules out providing no information as an optimal policy. To summarize, when \(p>1-a\), full information is optimal, and Seller's optimal price and profit are \(p_h\) and \(p_h\mu G(1-p_h)\), respectively.

Now suppose that \(p \le 1-a\). Again by the implicit function theorem,
\begin{align}
    t'(p) & = 1; \label{drt}
\end{align}
consequently, both \(r(p)\) and \(t(p)\) are increasing in \(p\), and by \eqref{drt}, 
\(t(p) = t(0) + p\).

If \(g(r(0)) \le 1/a\), it can be checked that for all \(p \le 1-a\), \(g(r(p)-p) \le 1/(p+a)\). By \autoref{ifp}, when \(\mu<p+a\), the distribution \(\{0,p+a\}\) is optimal; using this distribution, Buyer buys if and only if she buys without search, which happens with probability \(\mu/(p+a)\). Seller's expected payoff is therefore \(p\mu/(p+a)\), which is strictly increasing in \(p\). When \(\mu\ge p+a\), no information is optimal at the fixed price, and Seller's payoff is \(p\). In this case, \(p\le\mu-a<\mu(1-a)\), where \(\mu(1-a)\) is Seller's payoff from full information at \(p=1-a\). Hence, no price in the no-information branch can be globally optimal, while the payoff in the binary-distribution branch is maximized at \(p=1-a\), where the distribution is \(\{0,1\}\). Thus, among prices \(p\le1-a\), full information is optimal, with price \(1-a\) and expected payoff \(\mu(1-a)\).

Consider next the case in which \(g(r(0)) > 1/a\). Again by \autoref{ifp}, define
\[
\bar{p} = \sup\left\{p \in [0,1-a]: g(r(p)-p) > \frac{1}{p+a}\right\}.
\]
Strict log-concavity of \(g\) implies that \(g(r(p)-p)>1/(p+a)\) exactly when \(p<\bar p\), with equality when \(p=\bar p\). Now for a fixed \(p\), the optimal information provision depends on the location of \(\mu\). If \(\bar{p}\le p\le1-a\), \(\{0,p+a\}\) is optimal when \(\mu<p+a\), while \(\{\mu\}\) is optimal when \(\mu\ge p+a\). For \(p<\bar{p}\),
\begin{itemize}
    \item if \(\mu\in(0,r(p))\), \(\{0,r(p)\}\) is optimal,
    \item if \(\mu\in[r(p),t(p)]\), \(\{\mu\}\) is optimal,
    \item if \(\mu\in(t(p),p+a)\), \(\{t(p),p+a\}\) is optimal,
    \item if \(\mu\in[p+a,1)\), \(\{\mu\}\) is optimal.
\end{itemize}
No information cannot be globally optimal. When \(\mu\ge p+a\), Seller's payoff is \(p\le\mu-a<\mu(1-a)\), so full information at \(p=1-a\) is strictly better. When \(\mu<p+a\), Seller's payoff under no information is \(pG(\mu-p)\), and the argument surrounding \eqref{opm} applies unchanged. Moreover, \(p<\bar{p}\) implies that \(r(p)<p+a\); in the first bullet point, Seller's expected payoff is strictly increasing in \(p\), making it strictly better to charge \(\bar{p}\) and provide information according to \(\{0,\bar{p}+a\}\). Consequently, it only remains to consider the third bullet point.

To this end, suppose Seller provides information according to \(\{t(p), p+a\}\). Seller's problem of finding the optimal price is
\[
p\left[G(t(p)-p) \frac{p+a-\mu}{p+a-t(p)}+\frac{\mu-t(p)}{p+a-t(p)}\right];
\]
by \eqref{dft}, it can be written as
\[
p\left[G(t(0)) \frac{p+a-\mu}{a-t(0)}+\frac{\mu-p-t(0)}{a-t(0)}\right];
\]
because \(G(t(0)) < 1\), it is strictly concave in \(p\). Then the optimal price is given by 
\[
p_t = \frac{a G(t(0))-t(0)}{2[1-G(t(0))]}+\frac{\mu}{2}.
\]
For this price to be optimal, it must satisfy \(p_t<\bar p\). Observe that at \(\bar{p}\), \(r(\bar{p}) = t(\bar{p}) = t(0)+\bar{p}\). Then by definition of \(\bar{p}\),
\[
g(r(\bar{p})-\bar{p}) = \frac{G(r(\bar{p})-\bar{p})}{r(\bar{p})} = \frac{1}{\bar{p}+a},
\]
and hence
\[
\bar{p} = \frac{a G(t(0))-t(0)}{1-G(t(0))}. 
\]
But then \(p_t < \bar{p}\) implies that \(\mu < t(0) + p_t = t(p_t)\), which in turn implies that \(\{t(p_t),p_t+a\}\) is not optimal at \(p_t\). Therefore, the only remaining candidate among prices \(p\le1-a\) is \((p,\{0,p+a\})\) on its feasible range \(\mu<p+a\). Consequently, as in the case \(g(r(0))\leq1/a\), full information is optimal, \(p=1-a\) is the optimal price, and Seller's expected payoff is \(\mu(1-a)\). Summarizing, among prices \(p\le1-a\), full information at \(p=1-a\) is optimal. 

To conclude, full information is always optimal; the choice of the optimal selling strategy boils down to comparing \(1-a\) and \(p_hG(1-p_h)\). Then \(p = 1-a\) is optimal if and only if \(p_h G(1-p_h) \le 1-a\), and otherwise \(p_h\) is optimal. This yields the statement in the proposition. 
\end{proof}
\end{appendices}

\end{document}